\newtheorem{theorem}{Theorem}
\newtheorem{lemma}{Lemma}
\newtheorem{proposition}{Proposition}
\newtheorem{definition}{Definition}
\newcommand{\n}{\mathfrak{n}}
\newcommand{\Mc}[1]{\mathcal{#1}}
\newcommand{\setN}{\mathbb{N}}
\newcommand{\setR}{\mathbb{R}}
\newcommand{\setC}{\mathbb{C}}
\newcommand{\Id}{\mathbb{I}}
\newcommand{\braket}[2]{\langle #1| #2 \rangle }
\newcommand{\ii}{\textsl{i}}
\DeclareMathOperator*{\sign}{sign}
\begin{document}

\begin{center}{\Large \textbf{
Calculus of Variation and Path-Integrals with Non-Linear Generalized Functions
}}\end{center}

\begin{center}
Quentin Ansel\textsuperscript{1*},
\end{center}

\begin{center}
Laboratoire Interdisciplinaire Carnot de Bourgogne, CNRS UMR 6303, Universit\'{e} de Bourgogne, BP 47870, F-21078 Dijon, France
\\
* quentin.ansel@u-bourgogne.fr
\end{center}


\begin{center}
\today
\end{center}


\section*{Abstract}
{\bf
The calculus of variation and the construction of path integrals is revisited within the framework of non-linear generalized functions. This allows us to make a rigorous analysis of the variation of an action that takes into account the boundary effects, even when the approach with distributions has pathological defects. A specific analysis is provided for optimal control actions, and we show how such kinds of actions can be used to model physical systems. Several examples are studied: the harmonic oscillator, the scalar field, and the gravitational field. For the first two cases, we demonstrate how the boundary cost function can be used to assimilate the optimal control adjoint state to the state of the system, hence recovering standard actions of the literature. For the gravitational field, we argue that a similar mechanism is not possible. Finally, we construct the path integral for the optimal control action within the framework of generalized functions. The effect of the discretization grid on the continuum limit is also discussed.
}

\vspace{10pt}
\noindent\rule{\textwidth}{1pt}
\tableofcontents\thispagestyle{fancy}
\noindent\rule{\textwidth}{1pt}
\vspace{10pt}

\section{Introduction}
\label{sec:intro}
The theory of distribution has been initiated by Laurent Schwartz~\cite{schwartz_theorie_1966,friedlander1998introduction} to provide a rigorous mathematical framework to Dirac's "$\delta$" and many other singularity problems in the analysis of differential equations~\cite{lutzen2012prehistory}. Since it has been one of the most successful mathematical revolutions in physics with applications in electromagnetism~\cite{idemen2011discontinuities}, general relativity~\cite{parker1979distributional,heinzle2002remarks}, and quantum theory~\cite{schwartz1968application,greiner1996field}. It has been also an unavoidable tool in probability and various mathematical modeling.

However, the theory of distribution has a major drawback: the product of two arbitrary distributions is usually ill-defined~\cite{schwartz_theorie_1966,schwartz1954limpossibilite}. This is a big issue in many fields of physics, such as quantum field theory, where the product of Dirac distributions must be used in some calculations. Various directions have been attempted to remedy the problem, and nowadays, the most popular approach is the theory of non-linear generalized functions~\cite{gsponer_concise_2006,colombeau_elementary_2011,colombeau_topics_2011,colombeau_nonlinear_2014}. These "functions" are, in fact, not functions in the usual sense, they are rather a generalization of distributions where the product of two elements is always well-defined. The root of the theory is similar in spirit to the one of distribution, a generalized function can be seen as the limit of a sequence of smooth functions of compact support (or a smooth function with a sufficiently fast decay at infinity). The product of such functions is always well defined and in the limit, it gives us a generalized function. However, if two generalized functions can be assimilated with two distributions, their product may not be a distribution. The set of generalized functions is therefore bigger than the set of distributions. Besides a few technical details, this simple extension removes all the pathological cases encountered with distributions and it opens the door to the analysis of many mathematical models which have been disregarded up to date~\cite{gsponer_concise_2006,colombeau_heisenberg-pauli_2008,grosser2013geometric}.

Among all the fields of applications, the calculus of variations of a functional~\cite{Courant_Hilber_vol_1,Courant_Hilber_vol_2,gelfand2000calculus}, written in its modern version, is a subject that uses the properties of distributions  extensively~\cite{greiner1996field}. However, as useful as it can be, distributions cannot take into account the variations of a functional on the boundary of its integration domain, since it requires computing the product of distributions. We are then forced to impose by hand that the boundary has no effect, or we have to use old approaches, which are not completely satisfactory, because of some divergent quantities. At least in the physics community, these subtleties are totally disregarded and never discussed. However, they play an important role, specifically in quantum field theory and in general relativity. Generalized functions can be used to remedy the problems, and they allow us to rewrite the entire theory consistently. All the pathological details are removed, the product of distributions is not a problem anymore, and the divergences can be interpreted easily as generalized functions. Rewriting the calculus of variations with this new language is the first goal of this paper.

The second goal of this paper is to revisit the modeling of quantum systems with an action similar to the one used in optimal control theory~\cite{ito2008lagrange,contreras_dynamic_2017}, a research program initiated in~\cite{ansel2022loop}, which was motivated by quantum gravity issues. In~\cite{ansel2022loop} a path integral was derived from the optimal control action \cite{bryson1975applied,contreras_dynamic_2017,ito2008lagrange} with starting point, and it has been underlined that it can be used for calculating the time evolution of a quantum state in a straightforward way. The underlying idea is that equations of motion are not included in an action with a quadratic Lagrangian, but with an optimal control action, which has a linear Lagrangian~\cite{contreras_dynamic_2017}. The particularity of this Lagrangian is that the equations of motions are included with the use of Lagrange multipliers called Optimal control adjoint states. The advantage of this choice is that when the path integral is computed, we naturally introduce, in the calculations, a series of Dirac distributions that acts as a propagator of the equations of motion. Hence, in the continuum limit, the path integral can be expressed with the flow of the classical system and the calculation is trivial once the flow is known. The quantum theory is introduced with the use of coherent states that can be computed on the boundary of the integration domain of the action. In quantum gravity these properties are welcome because it allows us to analyze some quantum effects without being able to solve the full Weehler-DeWitt equation~\cite{rovelli_quantum_2007}, we only need to know the flow of a set of well-chosen coherent states.

In this paper, we come back to these ideas, and several aspects of the theory are refined. In particular, we show how the terminal cost of an optimal control problem can be used to include both the quantum scalar product and a constraint on the optimal control adjoint state that forces this latter to take the value of the system state on an extremal. Such an analysis is made possible with the use of generalized functions. Also, the construction of the path integral is revisited with the formalism of generalized function. This allows us to handle at the same time the discretization of the system, as required by the path integral, and the fact that the fields are continuous functions. Moreover, generalized functions allow us to analyze the continuum limit precisely. The theory is illustrated with the analysis of several case studies: the harmonic oscillator~\cite{messiah1962quantum}, the scalar field~\cite{itzykson2012quantum}, and the gravitational field~\cite{misner_gravitation_1973}. The analysis of the harmonic oscillator is mostly a matter of consistency checks, and once the main equations are derived, it is clear that all the expected behaviors follow, both from the classical and quantum sides. This is far less trivial for the other systems. In the case of the scalar field, the construction of well-defined quantum states is not a simple issue, specifically in curved space-time~\cite{gerard2018introduction,khavkine2015algebraic}. In the case of gravity, described by Lorentzian manifold, the construction of coherent states is a harder problem~\cite{rovelli_quantum_2007,rovelli_covariant_2014,stottmeister_coherent_I_2016,bahr_gauge-invariant_2009}. To provide a complete and rigorous description of these problems within the formalism described in this paper, each of these difficult problems can deserve their own articles. Hence, the scope of this paper is limited to a few properties of the system. In particular, it is focused on (i) the description of the system in terms of optimal control action and in terms of non-linear generalized functions (ii) the introduction of a boundary cost function that provides a constraint on the optimal control adjoint state. Issues such as the construction of the quantum Hilbert space and its connection with the boundary cost function are not investigated. 

The paper is organized as follows, in Sec.~\ref{sec:intro_generalized_fun}, we introduce state-of-the-art notions of generalized functions, and a specific extension to tensor fields is proposed. In Sec.~\ref{sec:action_variation_with_GF}, the variation of an action with generalized functions is investigated, and the case of the optimal control action is investigated in detail. In Sec.~\ref{sec:anlysis_action_physical_syst}, the theory is applied to the physical systems mentioned in the previous paragraph. In Sec.~\ref{sec:path_integral_oc}, the path integral for an optimal control action is constructed, and the application of the physical systems of Sec.~\ref{sec:anlysis_action_physical_syst} is discussed. In Sec.~\ref{sec:physical interpretation}, a physical interpretation of the optimal control adjoin state is given and discussed. A conclusion and prospective views are given in Sec.~\ref{sec:conclusion}. In the appendices, the physical interpretation of the adjoint state is made, and the path integral with the standard action of the harmonic oscillator is revisited with the use of non-linear generalized functions.

\section*{List of Symbols}

\begin{itemize}
\item $\eta$ is an arbitrary mollifier, $\eta_q$ is a mollifier of order $q$, and $\eta_\epsilon (x) = \tfrac{1}{\epsilon}\eta(x/\epsilon)$ is a re-scalled mollifier.
\item $\Mc A_q$ is the space of  mollifiers of order $q$.
\item $\Mc S$ is the Schwartz space.
\item $\Mc D$ is the space of test function.
\item $\Id_A(x)$ is the indicator function over $A$, equal to one if $x\in A$, and zero otherwise.
\item $\tilde f$ is a function convoluted by a mollifier.
\item $\Mc E_m$ is the space of moderate function.
\item $\Mc N$ is the space of negligible function, or depending on the context, a normalization constant.
\item $\Mc G$ is the Colombeau algebra.
\item $\mathfrak{n_m}$ is a negligible function of order $m$.
\item $\hat D(x,X)$ is the diffeomorphism operator from a coordinate system $x$ to a coordinate system $X$.
\item $\tfrac{\delta I}{\delta \tilde f(x)}$ is the functional derivative of $I$ with respect to $\tilde f(x)$.
\item $G$ is a Lie group and $\mathfrak{g}$ is its Lie-algebra.
\item $S$ is an action, $S_{quad}$ is the standard quadratic action of a physical system, $S_{OC}$ is the optimal control action of a physical system, and $S_{2nd}$ is an action with second order derivatives.
\item $\hat W$ is the propagator of quantum states, given by the path integral, and $W^\mu$ is the flow associated with the classical equation of motion for a rank-1 tensor.
\end{itemize}

\section{Summary on Non-linear Generalized Functions}
\label{sec:intro_generalized_fun}

This section is dedicated to a pedagogical introduction to generalized functions. Further details can be found in standard textbooks or research articles \cite{gsponer_concise_2006,colombeau2000new,colombeau_elementary_2011,colombeau_topics_2011,colombeau_nonlinear_2014}.

We start our discussion with the generalized function from $\setR$ to $\setR$. Further generalizations are given in the second step.

\subsection{Generalized Function on $\setR$}
The key concept in the definition of generalized functions is the concept of mollifier.
\begin{definition} \textbf{(Mollifier)}
A mollifier $\eta$ of order $q$ is a smooth function $\setR \rightarrow \setR$ of the Schwartz Space $\Mc S$ or the space of test function $\Mc D$ such that:
\[
\int \eta(x) dx =1
\]
\[
\int x^n \eta(x)dx = 0 ~;~ \forall n ~|~1\geq n \geq q.
\]
\label{def:mollifier}
\end{definition}

As explained in detail in the following paragraphs, mollifiers behave like Dirac distributions, and they are the main building blocks for the construction of generalized functions.

\begin{definition} \textbf{(Space of Mollifier)}
The space $\Mc A_q$ of mollifiers of order $q$ is a subset of $\Mc S$ (Schwartz space)or $\Mc D$ (space of test functions) that contains all mollifiers of a given order.
\end{definition}

\begin{proposition}
The set $\Mc A_q$ is non-void for $q=1,2...$
\end{proposition}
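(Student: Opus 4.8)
The plan is to show that for each fixed $q \ge 1$ the constraints defining a mollifier of order $q$ form a finite, linearly independent system of moment conditions, so that they can always be satisfied within the infinite-dimensional space $\Mc D$ (and hence also $\Mc S$, since $\Mc D \subset \Mc S$). Concretely, I would introduce the linear moment map
\[
M : \Mc D \to \setR^{q+1}, \qquad M(\eta) = \left( \int \eta(x)\,dx,\ \int x\,\eta(x)\,dx,\ \dots,\ \int x^q\,\eta(x)\,dx \right),
\]
which is well defined because every test function decays fast enough for all polynomial moments to converge. A mollifier of order $q$ is then exactly an element of the preimage $M^{-1}(1,0,\dots,0)$, so it suffices to prove that $M$ is surjective.

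Surjectivity I would obtain by a finite-dimensional duality argument. The image of $M$ is a linear subspace of $\setR^{q+1}$; were it proper, there would exist a nonzero vector $(c_0,\dots,c_q)$ annihilating it, i.e. $\int p(x)\,\eta(x)\,dx = 0$ for every $\eta \in \Mc D$, where $p(x) = \sum_{n=0}^q c_n x^n$. Since $p$ is continuous, the fundamental lemma of the calculus of variations forces $p \equiv 0$, hence all $c_n = 0$, a contradiction. Thus the moment functionals $\eta \mapsto \int x^n \eta\,dx$ for $n = 0,\dots,q$ are linearly independent, $M$ is onto, and a mollifier of order $q$ exists for every $q$.

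An alternative, fully constructive route stays inside $\Mc S$: start from the Gaussian $g(x) = e^{-x^2}$ and seek $\eta(x) = P(x)\,g(x)$ with $P$ a real polynomial of degree $q$. Imposing the $q+1$ moment conditions yields a square linear system for the coefficients of $P$ whose coefficient matrix is the Hankel moment matrix $\big( \int x^{m+n} g(x)\,dx \big)_{m,n}$ of the positive measure $g\,dx$; this matrix is positive definite, hence invertible, so the coefficients of $P$ are uniquely determined and produce an explicit $\eta \in \Mc S$. If one insists on compact support (the $\Mc D$ case), I would instead take translates $\phi(\cdot - a_0),\dots,\phi(\cdot - a_q)$ of a single fixed bump at distinct nodes $a_j$ and solve the resulting Vandermonde-type system.

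The computations are routine; the only point requiring care — the main obstacle — is the bookkeeping that the constructed object genuinely lies in the required space. One must check that the polynomial prefactor does not spoil Schwartz decay (it does not, since $P(x)g(x)$ and all its derivatives decay faster than any inverse power), and, in the compactly supported variant, that a finite linear combination of smooth bumps with compact support remains smooth and compactly supported. Everything else follows from linearity and the non-degeneracy of the moment system established above.
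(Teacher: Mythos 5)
Your proposal is correct, but it reaches the conclusion by a genuinely different route than the paper. The paper's proof is a recursion inside a single ``family'' of mollifiers: starting from a mother function $\eta_0 \in \Mc D$ with $\int \eta_0\,dx = 1$, it sets $\eta_q = \eta_{q-1} + \alpha_q\,\partial_x^q \eta_0$ and fixes $\alpha_q$ by exploiting the triangular moment structure of derivatives, namely $\int x^n\,\partial_x^q \eta_0\,dx = 0$ for $n<q$ while $\int x^q\,\partial_x^q \eta_0\,dx = (-1)^q q!$, so each correction annihilates the $q$-th moment without disturbing the lower ones or the normalization. You instead prove existence in one shot: the moment map $M:\Mc D \to \setR^{q+1}$ is surjective because an annihilating polynomial would integrate to zero against every test function and hence vanish by the fundamental lemma of the calculus of variations. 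Your argument is less constructive but buys more information: it shows $\Mc A_q$ is a nonempty affine subspace of codimension $q+1$, uniformly in $q$. Your Hankel variant, $\eta = P(x)e^{-x^2}$ with $P$ of degree $q$ determined by inverting the positive-definite Hankel moment matrix of the Gaussian, is in substance a direct linear solve for exactly the objects the paper builds recursively in its Gaussian example (there $\eta_q = \eta_0 \sum_n \alpha_n P_n$ is precisely a polynomial times the Gaussian), so the two constructions coincide in that case; the paper's recursion additionally delivers the notion of ``family of mollifiers generated by $\eta_0$'' that it uses immediately afterwards for the bump and Gaussian examples. Two small points you should make explicit: positive definiteness of the Hankel matrix uses that the Gaussian weight is strictly positive (so $\int P^2 g\,dx = 0$ forces $P \equiv 0$), and in the translated-bump variant you must assume $\int \phi\,dx \neq 0$, since the $n$-th moment of $\phi(\cdot - a_j)$ is a polynomial in $a_j$ of degree $n$ with leading coefficient $\int \phi\,dx$, and the triangular-times-Vandermonde factorization that gives invertibility fails if that integral vanishes.
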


The proof is given by the explicit construction of examples for an arbitrary value of $q$. The idea is the following, first, we find by hand a simple example of a mollifier of low order, and then, we construct higher order mollifiers with a recursion formula.

Let us choose a function $\eta_0 \in \Mc D$ such that $\int \eta_0(x)dx =1$. Since there exist functions that satisfy this property, it proves the proposition for $q=0$. Now let us construct a function $\eta_1 = \eta_0 + \alpha_1 \partial_x\eta_0$. The goal is to find $\alpha_1$ so that $\eta_1  \in \Mc A_1$. Since $\eta$ is a test function, we have:
\[
\int x \partial_x \eta(x)dx = -\int  \eta(x)dx = -1 .
\]
Therefore, it suffices that
\[
\alpha_1 = \int x \eta_0(x) dx .
\]
In order to have $\int x \eta_1(x) dx = 0$. Following the same idea, we can construct a mollifier of order 2, $\eta_2$, from the mollifier $\eta_1$, and by recursion, we can construct a mollifier of any arbitrary order. The recursion formula is:
\begin{align}
\label{eq:eta_q_from_eta_q-1}
\eta_q &= \eta_{q-1}+ \alpha_q \partial_x^q \eta_0 \\ 
\alpha_q &= -\frac{1}{n!}\int x^q \eta_{q-1}(x)dx ,
\label{eq:def_alpha_q}
\end{align}
with $\eta_{q-1} \in \Mc A_{q-1}$. 

\begin{definition}\textbf{Family of mollifiers}
A family of mollifiers is the ensemble of mollifiers $\eta_q$ generated from the mother function $\eta_0$ and its derivatives.
\end{definition}

\paragraph{Note:} The underscore index $q$ that specifies the order of the mollifier is usually dropped unless it is specifically necessary.

 As examples, we consider the family of mollifiers generated by the bump function and the Gaussian function.
 
 \paragraph*{Bump mollifier}

The bump function is defined by:
\begin{equation}
\eta_0(x) = A_p e^{\frac{1}{x^p-1}}\Id_{]-1,1[}(x) .
\end{equation}
The parameter $p \in 2 \setN\*$ sets the degree of flatness around zero. In the rest of this paragraph, we use $p=2$. The normalization factor $A_p$ depends on $p$. For $p=2$, it is expressed in terms of a Whittaker's function: $A_2 = \sqrt{\pi/e}W_{-1/2,1/2}(1)=0.44399...$, and $\Id_A$ is the indicator function on $A$. Since this function is positive and symmetric around $0$, it is already a mollifier of order 1. A straightforward application of Eq.~\eqref{eq:eta_q_from_eta_q-1} and Eq.~\eqref{eq:def_alpha_q} gives us $\eta_q = \eta_0(x)\sum_{n=0}^q \alpha_n P_n (x)$. For the first orders, we have
\begin{align*}
\alpha_0 & = 1 & P_0  =& 1 \\
\alpha_2 & \approx  -0.07905681811965157 & P_2  =&  \frac{-2 + 6 x^4}{(-1 + x^2)^4}\\
\alpha_4 & \approx 0.004042404740618485 & P_4  =& \frac{4 (-3 + 6 x^2 + 58 x^4 - 132 x^6 + 45 x^8 + 30 x^{10})}{(-1 + x^2)^8} .
\end{align*}
with $\alpha_{2n+1}=0$. Thus, $\eta_{2n + 1} = \eta_{2n}$.

The functions are plotted in Fig.~\ref{fig:example_mollifier} a). Note that the bump function has compact support, and thus, the repeated derivatives create large picks located near the boundary of the support of definition.

\paragraph*{Gaussian mollifier}
Next, we consider the mollifier
\begin{equation}
\eta_0(x) = \frac{1}{\sigma\sqrt{2\pi}}e^{-x^2/2\sigma^2} .
\end{equation}
Similarly to the bump function, the normalized Gaussian is also a mollifier of order 1, and we also have $\eta_q = \eta_0(x)\sum_{n=0}^q \alpha_n P_n (x)$.
First order coefficients $\alpha_n$ and polynomials $P_n$ are :

\begin{align*}
\alpha_0 & = 1 &  P_0 & = 1\\
\alpha_2 & =  -\sigma^2/2 & P_2 & =  \frac{(x - \sigma) (x + \sigma)}{\sigma^4}\\
\alpha_4 & = \sigma^4/8 & P_4 & = \frac{x^4 - 6 x^2 \sigma^2 + 3 \sigma^4}{\sigma^8} \\
\alpha_6 & = -\sigma^6/48 & P_6 & = -\frac{x^6 - 15 x^4 \sigma^2 + 45 x^2 \sigma^4 - 15 \sigma^6}{\sigma^{12}}  \\
\alpha_8 & =  -\sigma^8/384 & P_8 & =  \frac{x^8 - 28 x^6 \sigma^2 + 210 x^4 \sigma^4 - 420 x^2 \sigma^6 + 
 105 \sigma^8}{\sigma^{16}} ,
\end{align*}
and by the symmetry of the function, any odd order coefficients are 0, thus $\eta_{2n+1} = \eta_{2n}$.

First iterations up to the order 8 are plotted in Fig.~\ref{fig:example_mollifier} b). Interestingly, we recover a shape similar to a sinc function multiplied by an apodization function.

\begin{figure}
{\centering
a) \includegraphics[width =0.45\textwidth]{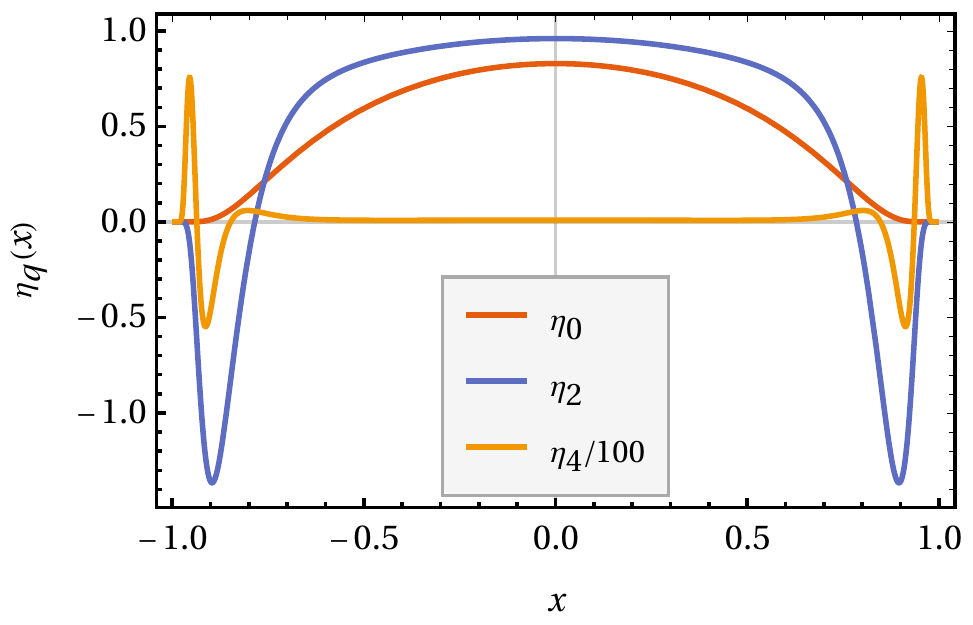}
b) \includegraphics[width =0.45\textwidth]{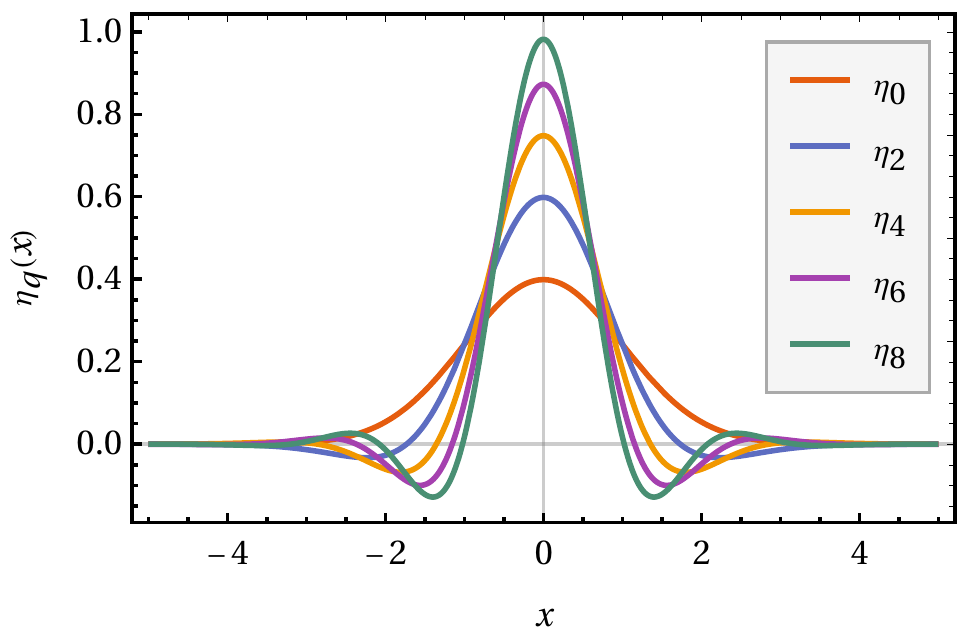}
}
\caption{\textbf{Panel a)} Family of mollifier generated by the bump function $\eta_0(x) = A_p e^{\frac{1}{x^p-1}}\Id_{]-1,1[}(x).$ for $q = 0,2,4$. \textbf{Panel b)} Familly of mollifiers generated by the normalized Gaussian function  $\eta_0(x) = \frac{1}{\sigma\sqrt{2\pi}}e^{-x^2/2\sigma^2} $ for $q=0,2,4,6,8$. In both cases, mollifiers of order $q+1$ are the same as the ones of order $q$.}
\label{fig:example_mollifier}
\end{figure}

Now that the notion of mollifiers has been clarified, we can construct the generalized functions.

\begin{definition} Let $f \in C^{\infty}$, $\eta$ a mollifier of order $q$, and $\epsilon \in ]0,1[$. We define the $\eta$ regularization of $f$ with the convolution product:
\[
\tilde f (x) = \int f(y) \frac{1}{\epsilon} \eta \left( \frac{y - x}{\epsilon}\right) dy
\]
\end{definition}

\paragraph{Note:} In the rest of this paper, the resized mollifier $\frac{1}{\epsilon} \eta \left( \frac{y - x}{\epsilon}\right)$ is often noted $\eta_\epsilon(y-x)$, not to be confused with a $\eta_q$, a mollifier of order $q$.

\begin{proposition}
\[
\tilde f (x) = f(x) + O(\epsilon^{q+1})
\]
\end{proposition}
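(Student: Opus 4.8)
The plan is to reduce the statement to a Taylor expansion played against the moment conditions that define a mollifier of order $q$. First I would perform the change of variables $u = (y-x)/\epsilon$ in the defining integral, which turns the rescaled convolution into
\[
\tilde f(x) = \int f(x + \epsilon u)\, \eta(u)\, du ,
\]
where $\eta$ is now the unscaled mollifier of order $q$. This cleanly separates the $\epsilon$-dependence, carried entirely by the argument $x + \epsilon u$, from the fixed weight $\eta(u)$, and it will make the role of the moments $\int u^n \eta(u)\,du$ transparent.

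Next, I would Taylor-expand $f(x+\epsilon u)$ about $x$ to order $q$ with an explicit remainder; the integral form of the remainder is convenient because it keeps uniform control of the error term:
\[
f(x + \epsilon u) = \sum_{n=0}^{q} \frac{(\epsilon u)^n}{n!}\, f^{(n)}(x) + \frac{(\epsilon u)^{q+1}}{q!}\int_0^1 (1-t)^q f^{(q+1)}(x + t\epsilon u)\, dt .
\]
Substituting this into the integral and exchanging the finite sum with the integration, the $n$-th term produces the factor $\int u^n \eta(u)\, du$. By Definition~\ref{def:mollifier}, this factor equals $1$ for $n=0$ and vanishes for $1 \le n \le q$, so every intermediate term is killed and only the $n=0$ contribution $f(x)$ survives from the polynomial part. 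What remains is exactly the remainder integral, carrying an explicit prefactor $\epsilon^{q+1}$.

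The main obstacle, and the only point requiring genuine care, is bounding the remainder
\[
\tilde f(x) - f(x) = \frac{\epsilon^{q+1}}{q!}\int\!\!\int_0^1 u^{q+1}(1-t)^q f^{(q+1)}(x + t\epsilon u)\, dt\, \eta(u)\, du
\]
uniformly in $\epsilon \in\, ]0,1[$. Here the interplay between the possible growth of $f^{(q+1)}$ and the decay of $\eta$ is what matters. For a compactly supported mollifier $\eta \in \Mc D$ with $\mathrm{supp}\,\eta \subset [-R,R]$, the argument $x + t\epsilon u$ stays in the compact set $[x-R,x+R]$ for all $\epsilon \in\, ]0,1[$ and $t \in [0,1]$, so $f^{(q+1)}$ is bounded there and the double integral is finite and $\epsilon$-independent, giving the estimate at once. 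For a Schwartz mollifier $\eta \in \Mc S$ the $u$-integration extends over all of $\setR$; here I would invoke the fact that $\eta$ decays faster than any inverse polynomial, so $|u|^{q+1}|\eta(u)|$ is integrable, and assume, as is standard in this setting, that $f^{(q+1)}$ grows at most polynomially, which again renders the integral finite and bounded independently of $\epsilon$. In either case the right-hand side is $O(\epsilon^{q+1})$, which is the claim; I would flag explicitly that it is precisely the rapid decay of $\eta$ that licenses the bound in the non-compact case.
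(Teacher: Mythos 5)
Your proof is correct and follows essentially the same route as the paper's: the substitution $u=(y-x)/\epsilon$, a Taylor expansion of $f(x+\epsilon u)$ about $x$, and the moment conditions of Definition~\ref{def:mollifier} to annihilate all terms with $1\le n\le q$, leaving a remainder carrying the prefactor $\epsilon^{q+1}$. If anything, your version is more careful than the paper's, which formally expands $f$ into its full infinite Taylor series (a manipulation that tacitly requires analyticity or term-by-term justification), whereas your finite expansion with integral remainder, together with the explicit case distinction between compactly supported mollifiers in $\Mc D$ and rapidly decaying ones in $\Mc S$ (with at-most-polynomial growth of $f^{(q+1)}$), makes the uniform $O(\epsilon^{q+1})$ bound rigorous.
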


\begin{proof}
This is a straightforward calculation using the definition \ref{def:mollifier}.
\begin{equation}
\begin{split}
f_\varepsilon(x) &= \int dy \frac{1}{\varepsilon}\eta\left(\frac{y-x}{\varepsilon}\right)f(y) \\
& = \int dz ~\eta\left(z\right) f(x+\varepsilon z)\\
& = f(x) \int dz~ \eta(z) + \sum_{n=1}^\infty \frac{\varepsilon^n \partial_x^{(n)}f(x)}{n!} \int dz~z^n \eta (z) \\
& = f(x) + \sum_{n=q+1}^\infty \frac{\varepsilon^n \partial_x^{(n)}f}{n!} \int dz~z^n \eta (z) \\
& = f(x) + O(\epsilon^{q+1})
\end{split} 
\end{equation}
\end{proof}

With this proposition, we see that mollifiers are similar to a Dirac distribution, but with the difference that the equality holds with a rest of order $\epsilon^{q+1}$. Consequently, there are two possible ways to recover a Dirac distribution in the sense of distribution theory. We can take the limit $\epsilon \rightarrow 0$ or the limit $q \rightarrow \infty$.\footnote{Note that there is a simple way to define a mollifier of order $\infty$. It is sufficient to take a smooth plateau function of compact support, which is equal to 1 on an interval, and to take its Fourier transform. By the derivative properties of the Fourier transform, we obtain that all moments of the function vanish exactly.} In any case, we can consider a class of equivalence of function $\tilde f (x)$ which are equal to $f(x)$ up to a negligible rest of order $o(\epsilon^{q+1})$. The interesting property is that $\tilde f(x) \tilde g (x) = f(x) g(x)$ + a rest of order at least $\epsilon^{q+1}$. Thus, if we consider the ensemble of all classes of equivalence, the product of functions of the ensemble remains in the ensemble. Additionally, the ensemble has all the desired properties to recover Schwartz's distributions.

More precisely, we define the following sets:

\begin{definition}
\textbf{(space of moderate functions) }
\[
\Mc E_m = \left\lbrace \tilde f ~ \vert ~ \forall \alpha \in \setN_0, ~ \exists N \in \setN \text{ with } \sup_{x} |\partial^\alpha \tilde f (x)| = o(\epsilon^{-N}) \text{ as } \epsilon \rightarrow 0 \right\rbrace
\]
Note that for simplicity, we have omitted the support of $\tilde f$, which may induce slight differences if we work in $\Mc D$ or $\Mc S$.
\end{definition}

\begin{definition}
\textbf{(Set of negligible functions) }
\[
\Mc N = \left\lbrace \tilde f ~ \vert ~ \forall \alpha \in \setN_0, ~ \exists m \in \setN \text{ with } \sup_{x} |\partial^\alpha \tilde f (x)| = o(\epsilon^{m}) \text{ as } \epsilon \rightarrow 0 \right\rbrace
\]
In the following, we note a negligible function of order $m$, $\n_m$.
\end{definition}

\begin{definition}
\textbf{(Colombeau algebra) } The Colombeau algebra is the set of all equivalence classes of moderate functions which are equivalent modulo a negligible function, i.e. 
\[
\Mc G = \frac{\Mc E_M}{\Mc N}.
\]
\end{definition}
Elements of $g$ are called non-linear generalized functions.

Let us denote the equivalent class with brackets $[~]$, such that $[g]=\{g + \n_m ~ |~ \n_m \in \Mc N\} \in \Mc G$, with $m$ fixed. We have the properties:
\begin{itemize}
\item $[g]\cdot [h] = [gh]$
\item $D[g] = [Dg]$ with $D$ any partial differential operator.
\end{itemize}

\begin{definition}
\textbf{(Associated distribution)} A generalized function $f$ is said associated with a distribution $T \in \Mc D'$ (or $\Mc S'$) if for any representative function $\tilde f$ we have:
\[
\forall \phi \in \Mc D, ~ \lim_{\epsilon \rightarrow 0}\int \tilde f (x) \phi (x) dx = \braket{T}{\phi} ,
\] 
and we write $f \asymp T$.
\end{definition}
\begin{theorem}
\textbf{(Colombeau local structure theorem)} Any distribution is locally a generalized function~\cite{colombeau2000new}.
\end{theorem}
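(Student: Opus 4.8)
The plan is to produce, for each distribution $T$ restricted to a relatively compact open set, an explicit moderate representative obtained by mollification, and then to verify that it defines an element of $\Mc G$ associated with $T$. First I would reduce to the compactly supported case: fix a relatively compact open set $\Omega$ and a cutoff $\chi \in \Mc D$ with $\chi \equiv 1$ on $\Omega$; then $\chi T$ has compact support, hence finite order, and coincides with $T$ on $\Omega$. This localization is precisely what the word ``locally'' in the statement refers to, and it is what rescues the argument from the global obstruction that a distribution in $\Mc D'$ may fail to have finite order on all of $\setR$.

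Second, I would invoke the classical structure theorem for distributions of finite order: on $\Omega$ one may write $\chi T = \partial_x^m g$ for some continuous function $g$ and some integer $m$. The candidate representative is the regularization $\tilde T(x) = \braket{\chi T}{\eta_\epsilon(\cdot - x)}$, which is a genuine smooth function of $x$ for every fixed $\epsilon$. Transferring all the derivatives onto the mollifier and performing the change of variables $z = (y-x)/\epsilon$ converts each derivative on $\eta_\epsilon$ into a single factor $\epsilon^{-1}$, leaving
\[
\tilde T(x) = \frac{(-1)^m}{\epsilon^m}\int g(x+\epsilon z)\,\eta^{(m)}(z)\,dz .
\]
Since $g$ is bounded on compacta and $\eta^{(m)} \in \Mc S$ is integrable, this yields $\sup_x |\tilde T(x)| = O(\epsilon^{-m})$; the same manipulation with $m+|\beta|$ derivatives gives $\sup_x |\partial_x^\beta \tilde T(x)| = O(\epsilon^{-(m+|\beta|)})$ for every $\beta$. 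Hence $(\tilde T)_\epsilon \in \Mc E_m$ and $[\tilde T] \in \Mc G$.

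Finally, I would check association, i.e.\ $\lim_{\epsilon\to 0}\int \tilde T(x)\phi(x)\,dx = \braket{T}{\phi}$ for every $\phi \in \Mc D$ supported in $\Omega$. Writing the integral as $\braket{\chi T}{\eta_\epsilon * \phi}$ and using that $\eta_\epsilon * \phi \to \phi$ in $\Mc D$ as $\epsilon \to 0$ --- which follows from the moment conditions of Definition~\ref{def:mollifier} exactly as in the proposition $\tilde f = f + O(\epsilon^{q+1})$ --- the limit equals $\braket{\chi T}{\phi} = \braket{T}{\phi}$, as $\chi \equiv 1$ on the support of $\phi$. The main obstacle is the moderateness estimate: everything hinges on the local order $m$ being finite, which is why the cutoff and the structure theorem are indispensable, and one must verify that the negative power of $\epsilon$ generated by differentiating the mollifier is controlled by this finite $m$ rather than growing without bound.
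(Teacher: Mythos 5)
The paper itself offers no proof of this theorem: it is stated with a citation to Colombeau's monograph \cite{colombeau2000new}, so there is no in-text argument to compare against. Your proposal reconstructs the standard embedding argument from the literature, and it is correct in substance: the cutoff reduction to a compactly supported (hence finite-order) distribution, the candidate representative $\tilde T(x)=\braket{\chi T}{\eta_\epsilon(\cdot - x)}$, the moderateness estimate $\sup_x|\partial_x^\beta \tilde T(x)| = O(\epsilon^{-(m+|\beta|)})$ obtained by shifting all derivatives onto the mollifier, and the association check via $\braket{\chi T}{\eta_\epsilon * \phi} \to \braket{\chi T}{\phi} = \braket{T}{\phi}$ are exactly the ingredients of the classical proof. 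You also correctly identify the crux: moderateness hinges on the \emph{finite local order} supplied by the structure theorem, which is precisely what the word ``locally'' buys and why no global statement is attempted.

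Three technical points deserve tightening. First, if $\eta \in \Mc S$ rather than $\Mc D$, the convergence $\eta_\epsilon * \phi \to \phi$ cannot take place in $\Mc D$ (the supports are not compact); either take a compactly supported mollifier, or observe that $\chi T$ has compact support and therefore extends to $\Mc E' $, for which convergence of $\eta_\epsilon * \phi$ in $C^\infty$ on compact sets suffices for the pairing. Relatedly, since $\operatorname{supp}(\eta_\epsilon * \phi)$ exceeds $\operatorname{supp}\phi$ by $O(\epsilon)$, you should require $\chi \equiv 1$ on a neighborhood of $\overline{\Omega}$ and take $\epsilon$ small before concluding $\braket{\chi T}{\eta_\epsilon*\phi} \to \braket{T}{\phi}$. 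Second, your use of the structure theorem is written in one dimension ($\chi T = \partial_x^m g$ with a single continuous $g$); in dimension $n>1$ one gets a finite sum $\chi T = \sum_{|\alpha|\le m} \partial^\alpha g_\alpha$, which changes nothing in the estimates but should be stated, since the paper later works with fields on $\setR^N$. Third, note that the moment conditions of Definition~\ref{def:mollifier} play no role in your argument --- any mollifier embeds distributions with association --- and you invoke them only where they are genuinely needed, namely for $\eta_\epsilon * \phi \to \phi$ with quantitative rate; their real purpose in the paper's framework is the stronger statement that the embedding of smooth functions agrees with the identity modulo negligible functions, which your proof does not need and rightly does not claim.
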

In order to limit the number of different notations, we omit the square brackets and simply note $g$ instead of $[g]$, when no confusion is possible.

\begin{theorem}
\textbf{(theorem of approximation)}
Let $f:\setR \rightarrow \setR$. We can approximate the function using the following scheme:
\[
f(x) = \lim_{\epsilon \rightarrow 0} \lim_{N \rightarrow \infty}  \sum_{k=1}^N f_k ~\eta_\epsilon (y_k-x).
\]
\label{thm:approx}
\end{theorem}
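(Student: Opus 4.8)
The plan is to read the double limit from the inside out, recognising it as the composition of two approximations that are already available in the excerpt: an inner discretisation of a convolution integral, followed by an outer mollification limit. The natural reading of the coefficients is $f_k = f(y_k)\,\Delta y_k$, where $\{y_k\}_{k=1}^N$ is a partition of the integration domain with mesh $\Delta y_k \to 0$ as $N \to \infty$; the grid weight $\Delta y_k$ must be absorbed into $f_k$, since $\eta_\epsilon$ integrates to one only as a continuous object and a bare sum would carry the wrong normalisation. With this identification, $\sum_{k=1}^N f(y_k)\,\eta_\epsilon(y_k-x)\,\Delta y_k$ is simply a Riemann sum for the convolution $\int f(y)\,\eta_\epsilon(y-x)\,dy = \tilde f(x)$.

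First I would fix $\epsilon > 0$ and take the inner limit $N \to \infty$. For fixed $x$ and $\epsilon$, the integrand $y \mapsto f(y)\,\eta_\epsilon(y-x)$ is Riemann integrable: $\eta$ belongs to $\Mc S$ (or $\Mc D$), so $\eta_\epsilon(\,\cdot\,-x)$ is smooth and rapidly decaying (or compactly supported), which tames the behaviour at infinity and makes the product integrable for any locally integrable $f$ of moderate growth. Hence the Riemann sums converge and $\lim_{N\to\infty}\sum_{k=1}^N f_k\,\eta_\epsilon(y_k-x) = \tilde f(x)$. For the outer limit I would invoke the proposition established above, namely $\tilde f(x) = f(x) + O(\epsilon^{q+1})$, so that $\lim_{\epsilon\to 0}\tilde f(x) = f(x)$ and the two nested limits reproduce $f(x)$.

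The main obstacle is that the two limits do not commute, and the theorem crucially prescribes the inner limit first. As $\epsilon \to 0$ the rescaled mollifier $\eta_\epsilon$ concentrates into a Dirac-type spike, so at fixed $N$ the quantity $\eta_\epsilon(y_k-x)$ either blows up (if a node meets $x$) or collapses to zero, and the naive $\epsilon$-first limit is meaningless; the prescribed order $N\to\infty$ then $\epsilon\to0$ is precisely what guarantees that the spike is resolved by the grid before it is sharpened. A secondary point requiring care is the function class: pointwise equality in the stated form is clean when $f$ is continuous at $x$, where the mollification proposition applies directly, whereas for merely integrable $f$ the identity is best read in the generalized-function sense, as an equality of representatives modulo $\Mc N$. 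I would therefore close by noting that the statement lives most naturally in $\Mc G$, where the remainder $O(\epsilon^{q+1})$ is negligible by definition, so that the right-hand side and $f$ define the same element of the Colombeau algebra.
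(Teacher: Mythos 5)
Your proof is correct and takes essentially the same route as the paper's: both identify $f_k = f(y_k)\,\Delta y_k$, read the sum as a Riemann discretization of the convolution $\int f(y)\,\eta_\epsilon(y-x)\,dy = \tilde f(x)$, and then use the proposition $\tilde f(x) = f(x) + O(\epsilon^{q+1})$ to take the outer limit. Your added remarks on the non-commutativity of the two limits and on the regularity required of $f$ are sound observations that the paper's terse three-line computation leaves implicit, but they do not alter the argument.
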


\begin{proof}
This is a simple decomposition of the convolution product using the Riemann integral:
\begin{equation}
\begin{split}
f(x) &= \lim_{\epsilon \rightarrow 0} \int f(y)~  \eta_\epsilon(y-x) dy \\
& = \lim_{\epsilon \rightarrow 0} \lim_{N \rightarrow \infty}  \sum_{k=1}^N f(y_k) ~ \eta_\epsilon(y_k-x) \Delta y_k  \\
& = \lim_{\epsilon \rightarrow 0} \lim_{N \rightarrow \infty}  \sum_{k=1}^N f_k ~ \eta_\epsilon (y_k-x).
\end{split}
\end{equation}
\end{proof}
Note that imposing $\Delta y_k = \epsilon$ allows us to simplify the expression into : $f(x) \approx \sum_k f(y_k) \eta((y_k-x)/\epsilon)$. 

The quality of the approximation depends on the number of points, the value of $\epsilon$ but also the choice of mollifiers. In practice, to obtain a good approximation of the function, we have to adapt the mollifier and its order with the variations of the function.

\begin{theorem}
\label{thm:scalar_prod_2_Dirac_distrib}
\textbf{(Scalar product of two Dirac distributions)}
Let $\delta_x (z) \equiv \delta_0 (z-x)$ and $\delta_y (z) \equiv \delta_0 (z-y)$ be two Dirac distributions over $\setR$. The scalar product between them is not well defined as a map $\setR \times \setR \mapsto \setR$ but as a map $\setR \times \setR \mapsto \Mc G$, i.e.
\[
\braket{\delta_x}{\delta_y} = \delta_0 (x-y).
\]
\end{theorem}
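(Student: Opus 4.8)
The plan is to interpret the scalar product as the integral $\braket{\delta_x}{\delta_y} = \int \delta_x(z)\,\delta_y(z)\,dz$ and to replace each Dirac distribution by a representative of the corresponding generalized function, namely the resized mollifier: $\delta_x(z)$ is represented by $\eta_\epsilon(z-x)$ and $\delta_y(z)$ by $\eta_\epsilon(z-y)$. In the Colombeau algebra the pointwise product of these two smooth representatives is unproblematic, so the whole computation reduces to evaluating a convergent integral of smooth functions with $\epsilon$ kept as a free parameter rather than sent to zero prematurely.

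First I would carry out the elementary change of variables $z = x + \epsilon w$ in
\[
\int \eta_\epsilon(z-x)\,\eta_\epsilon(z-y)\,dz = \frac{1}{\epsilon^2}\int \eta\!\left(\frac{z-x}{\epsilon}\right)\eta\!\left(\frac{z-y}{\epsilon}\right)dz ,
\]
which collapses the double scaling into a single factor and yields $\tfrac{1}{\epsilon}\,\Phi\!\big(\tfrac{x-y}{\epsilon}\big)$, where $\Phi(s) = \int \eta(w)\,\eta(w+s)\,dw$ is the autocorrelation of the mother mollifier. The point of this step is that the dependence on $x$ and $y$ enters only through the difference $x-y$ rescaled by $\epsilon$, which is precisely the structure of a mollified Dirac delta centered at $x=y$.

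The decisive step is then to show that $\Phi$ is itself a mollifier of the same order $q$ as $\eta$, so that $\tfrac{1}{\epsilon}\Phi(\,\cdot/\epsilon)$ is a legitimate representative of $\delta_0$. Normalization $\int \Phi(s)\,ds = 1$ follows from Fubini together with $\int\eta = 1$, and the vanishing of the first $q$ moments follows by writing $\int s^n\Phi(s)\,ds$, after the substitution $t=w+s$, as a finite sum $\sum_{k}\binom{n}{k}(-1)^{n-k}M_k M_{n-k}$ of products of moments $M_j = \int u^j\eta(u)\,du$; for $1\le n\le q$ every summand contains a factor $M_j$ with $1\le j\le q$, which vanishes by Definition~\ref{def:mollifier}. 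Hence $\tfrac{1}{\epsilon}\Phi(\tfrac{x-y}{\epsilon})$ is the $\eta$-regularization, in the variable $x-y$, of the Dirac delta, and therefore defines the generalized function $\delta_0(x-y) \in \Mc G$, so that $\braket{\delta_x}{\delta_y} \asymp \delta_0(x-y)$.

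I expect the main obstacle to be conceptual rather than computational: one must make precise the sense in which the object \emph{is} $\delta_0(x-y)$. Evaluated pointwise at coincidence $x=y$ the representative gives $\tfrac{1}{\epsilon}\Phi(0)\to\infty$, so the scalar product is genuinely not a map $\setR\times\setR\to\setR$; the resolution is that $\epsilon$ must be retained and the result read as an element of $\Mc G$, i.e. a class modulo $\Mc N$, which is exactly the distinction the statement draws. A secondary technical care is to verify that different representatives of $\delta_x$ and $\delta_y$ differ only by elements of $\Mc N$, so that the class $[\tfrac{1}{\epsilon}\Phi(\tfrac{x-y}{\epsilon})]$ is well defined and independent of the chosen mollifier family up to a negligible function.
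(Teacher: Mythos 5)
Your proof is correct, but it takes a genuinely different route from the paper's. The paper allows two independent regularizations $\eta_\epsilon(z-x)$ and $\eta'_{\epsilon'}(z-y)$ (different mollifiers, different parameters) and treats one factor as a smooth function being sampled by the other; the moment property of Definition~\ref{def:mollifier} then yields $\tfrac{1}{\epsilon'}\eta'\bigl(\tfrac{x-y}{\epsilon'}\bigr)+O(\epsilon^{q+1})$, or symmetrically $\tfrac{1}{\epsilon}\eta\bigl(\tfrac{x-y}{\epsilon}\bigr)+O(\epsilon'^{q+1})$, i.e.\ a mollified delta \emph{plus an error term}, and the generalized function is then built from an equivalence class in both parameters $\epsilon,\epsilon'$. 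You instead take equal mollifiers and equal scales, and your change of variables gives the \emph{exact} identity $\tfrac{1}{\epsilon}\Phi\bigl(\tfrac{x-y}{\epsilon}\bigr)$ with $\Phi$ the autocorrelation of $\eta$; your key lemma, that $\Phi$ is again a mollifier of order $q$ (normalization by Fubini, moments by the binomial expansion $\sum_k\binom{n}{k}(-1)^{n-k}M_kM_{n-k}$ with every summand killed by a vanishing moment $M_j$, $1\le j\le q$), is correct and is a nice self-contained fact the paper does not isolate. What each approach buys: the paper's version covers the more general two-parameter regularization and exhibits the symmetry between the two factors, at the cost of an $O(\epsilon^{q+1})$ remainder; yours gives an error-free closed-form representative, at the cost of restricting to a single mollifier and scale. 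Two small points of precision: your representative is the $\Phi$-regularization of $\delta_0$, not the $\eta$-regularization, so it differs from $\tfrac{1}{\epsilon}\eta(\cdot/\epsilon)$ at the level of $\Mc G$ (their difference has sup norm of order $1/\epsilon$ and is \emph{not} in $\Mc N$); only association survives a change of mollifier, so your final statement $\braket{\delta_x}{\delta_y}\asymp\delta_0(x-y)$ is exactly the right formulation, and your earlier hope that different mollifier families agree modulo $\Mc N$ should be weakened accordingly --- which is also the level of rigor the paper itself adopts when it speaks of the result being ``associated with'' a Dirac distribution.
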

\begin{proof}
We express the Dirac Distribution with a limit of mollifiers and we compute the "scalar product":
\begin{equation*}
\begin{split}
\lim _{\epsilon,\epsilon'\rightarrow 0}\int dz~ \frac{1}{\epsilon }\eta \left( \frac{z - x}{\epsilon}\right) \frac{1}{\epsilon' }\eta' \left( \frac{z - y}{\epsilon'}\right) & = \lim _{\epsilon,\epsilon'\rightarrow 0} \frac{1}{\epsilon' } \eta' \left( \frac{x - y}{\epsilon'}\right) + O(\epsilon^{q+1}) \\
& = \lim _{\epsilon,\epsilon'\rightarrow 0} \frac{1}{\epsilon }\eta \left( \frac{x - y}{\epsilon}\right) + O(\epsilon '^{q+1})
\end{split}
\end{equation*}
It is then clear that the result is a moderate function which is associated with a Dirac distribution when $\epsilon, \epsilon' \rightarrow 0$. We can also define a generalized function from this result by introducing a class of equivalence with both $\epsilon$ and $\epsilon'$ as parameters defining negligible functions.
\end{proof}

\subsection{Generalized Tensor Fields}

Now that the main concepts of generalized functions are reviewed in the case of function over $\setR$, we can extend their definition to tensor fields. The extension of generalized function from $\setR^n$ to $\setR^m$ is not particularly a problem, however, some subtleties arrive with the notion of diffeomorphism invariance. This is because linear diffeomorphisms of $\setR^n$ are used to define the mapping from $\Mc C^k \mapsto \Mc D$.

The search for a theory of non-linear generalized tensor fields on manifolds has been an active field of research during the last years. Significant progress has been obtained by Vikers, Wilson, Nigsch, and their colleagues~\cite{grosser2013geometric,vickers1998nonlinear,VICKERS2012692,nigsch2020nonlinear}.

The main idea of their theory is to use a smoothing kernel that parallel transport the basis vector of the tangent space at an arbitrary point of the manifold (see \cite{nigsch2020nonlinear} for the most recent version of this latter). Explicitly the smoothing of a vector field $X$ is given by:
\begin{equation}
\tilde X_a (x) = \int_{y \in \Mc M} X_b(y) \Mc T^b_a(x,y) \omega_x (y)
\end{equation}
with $ \Mc T(x,y)$ a two-point tensor field that parallel transport the tangent spaces between $x$ and $y$ using geodesics and $\omega_x$ is a smoothing $n$-form centered at the point $x$. With his generalization of the smoothing operator, they are able to define a new Lie derivative that takes into account the effect of $\Mc T$ in the variation of $\tilde X$. Then, all the usual features of tensor calculus can be redefined and a Colombeau algebra can be constructed using the new Lie derivative in the definition of moderate and negligible functions. This theory has the advantage of using a smoothing operator which transforms nicely by diffeomorphism, but this is at the price of quite cumbersome new elements in the theory. In particular, the generalized Levi-Civita connection of a generalized metric must keep track of an underlying background connection. Ultimately, their theory is very difficult to use with the calculus of variation, specifically when we shall not assume the existence of an underlying geometry, and when the properties of the manifold are encoded in Euler-Lagrange equations. For this purpose, we take a different path where we do not try to define a diffeomorphic invariant Colombeau algebra, but rather, we express the action of diffeomorphisms as an operator between Colombeau algebras defined on specific coordinate systems.

In the following, we use the following smoothing of tensor fields:
\begin{definition}  \textbf{(coordinate dependent regularization)}
Let ${T^{\mu_1.. \mu_r}}_{\nu_1..\nu_s}$ be the components of a $(r,s)$ tensor field over a manifold $\Mc M$ of dimension $n$. For simplicity, we consider a unique chart. Let us consider a coordinate system denoted by $x=(x^1,..,x^n)$ or by $y=(y^1,..,y^n)$. The coordinate-dependent regularization is defined by:
\begin{equation}
{\tilde T^{\mu_1.. \mu_r}}_{\nu_1..\nu_s} (x)  = \int {T^{\mu_1.. \mu_r}}_{\nu_1..\nu_s} (y) \eta_{\epsilon}(y-x)~ d^n y
\end{equation}
where the shorthand notation $\eta_{\epsilon}(y-x) = \tfrac{1}{\epsilon}\eta(\tfrac{y-x}{\epsilon})$ is used.
\end{definition}
Remark that in this definition, the integral is not performed using a volume-invariant measure. This is because the mapping is coordinate-dependent, and an equivalent definition can be obtained by redefining the mollifier.

We are now interested in the effect of diffeomorphisms on the regularized tensor fields. It is clear that the standard action of a diffeomorphism on $\tilde T$ does not preserve, in general, the regularization operator. In order to preserve the structure, we need to transform both the components of the tensor field and the regularization kernel.

\begin{definition}  \textbf{(Diffeomorphism operator)}
Let a diffeomorphism that transform the coordinates $x$ into $X$ such that $x=f(X)$ and $X=f^{-1}(x)$. The diffeomorphism operator $\hat D(x,X)$ is defined by:
\begin{equation*}
\begin{split}
\left[ \hat D(x,X)  \tilde T\right]^{\mu_1'.. \mu_r'}_{\nu_1'..\nu_s'} =  \int & {\tilde T^{\mu_1.. \mu_r}}_{\nu_1..\nu_s}(x)  \frac{\partial f^{\mu_1'}}{\partial X^{\mu_1}}(f^{-1}(x)) ... \frac{\partial f^{\nu_s}}{\partial X^{\nu_s'}}(f^{-1}(x)) \\ 
& \times \eta'_{\epsilon } \left(f^{-1}(x) - X\right) \frac{1}{\det(\partial f(f^{-1}(x))} d^nx
\end{split}
\end{equation*}
where the mollifier $\eta'$ is not necessarily the same as $\eta$, the one used in the regularization in the coordinate system $x$, and $\partial f$ is a short notation for the Jacobian matrix $\tfrac{\partial f^\mu}{\partial X^{\nu}}$. Note that for simplicity, the same regularization parameter $\epsilon$ is used in both coordinate systems.
\label{def:diffeomorphism_operator}
\end{definition}
\begin{proposition}
Under the assumption that 
\[
\left(\tfrac{\partial f^{\mu_1'}}{\partial X^{\mu_1}}(f^{-1}(x)) ... \tfrac{\partial f^{\nu_s}}{\partial X^{\nu_s'}}(f^{-1}(x))\right)\det(\partial f(f^{-1}(x))^{-1}\]
does not diverge, we have:
\[
\left[ \hat D(x,X)  \tilde T \right]^{\mu_1'.. \mu_r'}_{\nu_1'..\nu_s'} = {\tilde T^{\mu_1'.. \mu_r'}}_{\nu_1'..\nu_s'} (X) + O(\epsilon^{q+1}).
\]
\end{proposition}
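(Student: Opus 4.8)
The plan is to reduce the statement to two ingredients that are already at hand: the single-point smoothing estimate $\tilde f(x)=f(x)+O(\epsilon^{q+1})$ proven earlier (in its obvious multivariate form, valid uniformly because the components are smooth), and the ordinary $(r,s)$ tensor transformation law, which I will recover from the Jacobian factors and the $\det^{-1}$ weight after a single change of integration variable. The guiding idea is that $\hat D(x,X)$ of Definition~\ref{def:diffeomorphism_operator} is built precisely so that the substitution $x\mapsto f^{-1}(x)$ converts ``regularize in $x$, then transport to $X$'' into ``transport to $X$, then regularize in $X$,'' the two operations commuting up to the smoothing error.

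Concretely, I would first insert the estimate $\tilde T^{\mu_1..\mu_r}_{\nu_1..\nu_s}(x)=T^{\mu_1..\mu_r}_{\nu_1..\nu_s}(x)+O(\epsilon^{q+1})$ into the definition of the operator. Next I perform the substitution $X'=f^{-1}(x)$, so that $x=f(X')$ and $d^nx=\det(\partial f(X'))\,d^nX'$ for an orientation-preserving chart change; this absorbs exactly the factor $1/\det(\partial f(f^{-1}(x)))$ in the integrand. The surviving Jacobian product $\tfrac{\partial f^{\mu_1'}}{\partial X^{\mu_1}}(X')\cdots\tfrac{\partial f^{\nu_s}}{\partial X^{\nu_s'}}(X')$ multiplied by $T^{\mu_1..\mu_r}_{\nu_1..\nu_s}(f(X'))$ is, by the transformation law, the components $T^{\mu_1'..\mu_r'}_{\nu_1'..\nu_s'}(X')$ of the same tensor read in the $X$ coordinates; verifying this index bookkeeping (upper indices carried by the $\partial f/\partial X$ blocks, lower indices by their inverses) is the one step that must be done carefully.

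At this point the leading contribution has become $\int T^{\mu_1'..\mu_r'}_{\nu_1'..\nu_s'}(X')\,\eta'_\epsilon(X'-X)\,d^nX'$, which is exactly the coordinate-dependent regularization in the $X$ system, namely $\tilde T^{\mu_1'..\mu_r'}_{\nu_1'..\nu_s'}(X)$, as demanded by the right-hand side. It then remains to show that the $O(\epsilon^{q+1})$ remainder survives the second mollification. For this I would bound the remainder integral by $\sup|r|$ times the supremum of the transformed Jacobian factors times $\int|\eta'_\epsilon(X'-X)|\,d^nX'$; the last quantity equals $\|\eta'\|_{L^1}$ independently of $\epsilon$ by the rescaling $\eta'_\epsilon(u)=\epsilon^{-n}\eta'(u/\epsilon)$, while the supremum of the Jacobian factors is finite near $X$ precisely by the non-divergence hypothesis of the proposition. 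Hence the remainder stays $O(\epsilon^{q+1})$ and the estimate follows.

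The main obstacle is structural rather than analytic: one must confirm that the Jacobian factors, after the change of variables, reproduce the genuine $(r,s)$ transformation law rather than its transpose or inverse, and that the $\det^{-1}$ weight pairs correctly with the measure (including the orientation and sign conventions). Once this is settled the error control is routine, resting only on the stated boundedness hypothesis and the $\epsilon$-independence of $\|\eta'\|_{L^1}$.
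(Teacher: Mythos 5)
Your proof is correct, and its core computation coincides with the paper's: the substitution $x=f(X')$ absorbing the $1/\det(\partial f)$ weight, the Jacobian product reassembling the $(r,s)$ transformation law, and the identification of the surviving integral $\int T^{\mu_1'..\mu_r'}_{\nu_1'..\nu_s'}(X')\,\eta'_\epsilon(X'-X)\,d^nX'$ with the regularization in the $X$ system are exactly the paper's final steps. Where you differ is in the order of the error bookkeeping: the paper does \emph{not} invoke the one-point estimate $\tilde T = T + O(\epsilon^{q+1})$ at the outset; instead it unfolds $\tilde T(x)$ as the double integral $\int T(y)\,\eta_\epsilon(y-x)\,d^ny$, integrates out $x$ first by collapsing $\eta_\epsilon(y-x)$ against the transported kernel (Jacobian factors times $\eta'_\epsilon(f^{-1}(x)-X)\det^{-1}$) in the manner of Theorem~\ref{thm:scalar_prod_2_Dirac_distrib}, and only then performs the same change of variables $y=f(Y)$. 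Your ordering is arguably the more robust of the two: in the paper's route the function being collapsed against contains $\eta'_\epsilon$ itself, whose derivatives grow like inverse powers of $\epsilon$, so the $O(\epsilon^{q+1})$ claimed there requires care the paper glosses over; by replacing $\tilde T(x)$ with $T(x)$ plus a remainder first, you need only uniform bounds on derivatives of the fixed smooth field $T$, and your estimate $\sup|r|\cdot\sup|\mathrm{Jac}|\cdot\Vert\eta'\Vert_{L^1}$, with $\Vert\eta'_\epsilon\Vert_{L^1}$ independent of $\epsilon$, closes the argument using the non-divergence hypothesis at exactly the point where the paper also uses it (together with compact support, to ensure the integrated remainder stays $O(\epsilon^{q+1})$). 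Two small caveats: the uniform multivariate estimate $\tilde T = T + O(\epsilon^{q+1})$ requires bounded derivatives of $T$ of order $q+1$, which holds in the paper's setting of compactly supported or Schwartz-class representatives but should be stated; and your $n$-dimensional rescaling $\eta'_\epsilon(u)=\epsilon^{-n}\eta'(u/\epsilon)$ is the correct one for $L^1$-invariance, whereas the paper's shorthand $\tfrac{1}{\epsilon}\eta(\cdot/\epsilon)$ carries over the one-dimensional normalization.
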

%
%
\begin{proof}
This follows from a straightforward calculation already performed in the proof of the Theorem~\ref{thm:scalar_prod_2_Dirac_distrib}. For simplicity of notation, we show the calculations for a vector field $T^\mu$. The case of arbitrary tensor fields follows the same calculation steps.
\begin{align}
 &\int d^nx~ \tilde T^{\mu}(x) \frac{\partial f^{\mu'}}{\partial X^{\mu}}(f^{-1}(x))  \eta'_{\epsilon} \left(f^{-1}(x) - X\right) \frac{1}{\det(\partial f(f^{-1}(x))}  \\
& = \int d^n x ~ d^n y ~ T^\mu(y) \eta_\epsilon(y-x) \frac{\partial f^{\mu'}}{\partial X^{\mu}}(f^{-1}(x))  \eta'_{\epsilon} \left(f^{-1}(x) - X\right) \frac{1}{\det(\partial f(f^{-1}(x))}  \\
& = \int  d^n y \left[ T^\mu(y) \frac{\partial f^{\mu'}}{\partial X^{\mu}}(f^{-1}(y))  \eta'_{\epsilon } \left(f^{-1}(y) - X\right) \frac{1}{\det(\partial f(f^{-1}(y))}  + O(\epsilon^{q+1},y) \right].
\end{align}
In the last line, the variable of the negligible function of order $\epsilon^{q+1}$ is specified in order to specify that it must be integrated once again. Using the fact that the function does not diverge and is of compact support, its integral is still of order $\epsilon^{q+1}$.
Next, we introduce the variable $Y$ such that $y=f(Y)$, and using $d^ny =\det( \partial f) d^nY$, we obtain:
\begin{align}
\left[\hat D(x,X) \tilde T\right]^\mu & =  \int T^\mu(f(Y)) \frac{\partial f^{\mu'}}{\partial X^{\mu}}(Y)  \eta'_{\epsilon } \left(Y - X\right)~  d^n Y + O(\epsilon^{q+1}) \\
& =  \int T^{\mu'}(Y) \eta'_{\epsilon } \left(Y - X\right)~  d^n Y + O(\epsilon^{q+1}) \\
& =\tilde T^{\mu'} (X)+ O(\epsilon^{q+1}).
\label{eq:tilde_T_new_coord}
\end{align}
\end{proof}

As we can see, the diffeomorphism operator modifies the components of the tensor field using the usual rule of tensor transformation, but it also maps the smoothing kernel expressed in one coordinate system into another smoothing kernel expressed in the new coordinate system. Note the introduction of a ramming small "error" of order $o(\epsilon^{q+1})$. This is interpreted by the fact that if we discretize a space in a coordinate system, the discretization may not be accurate when it is mapped into another coordinate system. This is obvious, for example, if we discretize $\setR^2$ with polar coordinates $(r,\theta)$ with constant discretization steps $\Delta r $ and $\Delta \theta$. When $r\rightarrow \infty$, the area of a cell becomes very large, while it is kept constant if we do something similar in Cartesian coordinates. Then, the mapping of discretized functions from one discretization grid to another one may introduce some discretization artifacts.

The increase of the error induced by a change of coordinate can also be viewed by evaluating $\tilde T^{\mu'}(X)$ in Eq.~\eqref{eq:tilde_T_new_coord}:
\begin{equation}
\left[\hat D(x,X) \tilde T\right]^\mu =  T^{\mu'} (X) + O(\epsilon^{q+1},X) +  o(\epsilon^{q+1})
\end{equation}
There are two terms of order $\epsilon^{q+1}$, one is constant, and the other depends on $X$. Hence, in practice when multiple changes of coordinates are performed, one must take care that the accumulation of error does not dominate the factor $\epsilon^{q+1}$. With such a requirement, $o(\epsilon^{q+1},X) +  o(\epsilon^{q+1})$ is a negligible function, both from the point of view of the old and the new coordinate system, and we can make equivalent classes of regularized tensors to create generalized function. The resulting Colombeau algebra is coordinate dependent, but after a change of coordinate, the negligible rest that depends on the old coordinate system is viewed as a negligible constant in the new coordinate system. Then, we can see the diffeomorphism operator as a mapping between Colombeau algebra associated with different coordinate systems.


\section{Action Variation With Generalized Functions}
\label{sec:action_variation_with_GF}

Now that the framework of generalized tensor fields is set, we can revisit the calculus of variation with this formalism. In this section, we present how the variation of a generalized field can be defined and first- and second-order variations of a functional are computed. A specific analysis of the boundary effects is made. This issue is generally not considered with distributions since it leads to ill-defined situations (products involving a singular distribution). As detailed below, with generalized functions, everything is well-defined and can be interpreted in terms of generalized functions.

For the sake of generality, we consider here a functional of the form:
\begin{equation}
I[F] = \int_{\Mc M} F(x)~ d^N x,
\end{equation}
and the specific case of a physical action is considered in the second step.

\subsection{Variations with Respect to a Field}

In this section, we consider that $F$ is an analytic function of a generalized tensor field $\tilde T_\mu$, or a function of the $n$-th derivative of the field $\partial_\nu^{(n)} \tilde T_\mu$. Hence, fixing a component of the field, we have:
\begin{equation}
F= \sum_{k=0}^\infty F_{(k)} \left (\partial_\nu^{(n)} \tilde T_\mu \right)^k ~;~ n \geq 0.
\end{equation}
\textbf{
However, to simplify the equations, the main elements of the theory are given for a scalar field, and the generalization to higher types of tensor field is given in the next step, with specific examples.}

To compute the variation of a functional with respect to a generalized field in a given coordinate system, we need to introduce a perturbation of the field with respect to an initial configuration:
\begin{definition}\textbf{(Field variation)}
Let $\tilde T(x)$ be a generalized vector field over $\Mc M$ expressed in a given coordinate system $x$. The variation $\lambda$ of the generalized scalar field at the position $y$ is defined by:
\[
\delta  \tilde T(x) = \lambda ~ \eta_{\epsilon}(x-y),
\]
\end{definition}
Note that we must have $\lambda\ll \epsilon$ in order to ensure that the variation remains small. Then, in the equations, we must always take the limit $\lambda \rightarrow 0$ before $\epsilon \rightarrow 0$.

\subsubsection{First order variation}
\begin{definition} \textbf{(First order variation of a functional)} The first order variation of a functional $I$ with respect to the variation of a generalized tensor $\tilde T$ field at a given position $y$ is given by the Gateaux derivative~\cite{BSMF_1919__47__70_1,gal2007mathematics,BeranZdenek2013Aofd}:
\begin{equation*}
\begin{split}
\frac{\delta I}{\delta \tilde T(y)} & = \lim_{\lambda \rightarrow 0} \frac{I[\tilde T(x) + \lambda ~ \eta_\epsilon (x-y)] - I[\tilde T(x)]  }{\lambda} \\
&= \left[ \frac{d}{d \lambda} I[\tilde T(x) + \lambda ~ \eta_\epsilon (x-y)] \right]_{\lambda= 0}
\end{split}
\end{equation*}
\label{def:first_order_variation}
\end{definition}

\begin{theorem}
Let $F$ be an analytic function of $\tilde T$ and its $n$-th derivatives. Let the functional $I$ be of the form:
\[
I = \int_{\Mc V} d^Nx F\left(\tilde T(x), \partial_\mu \tilde T (x),...,\partial^{(n)}_\mu \tilde T (x)\right)
\]
with $\Mc V \subseteq \Mc M$ is a compact region of $\Mc M$, or $\Mc V = \Mc M$. Then, the first order variation of $I$ is given by:
\begin{equation}
\begin{split}
\frac{\delta I}{\delta \tilde T (y)} = & \tilde \Id_{\Mc V} (y) \left[ \frac{\partial F}{\partial \tilde T(y)}  +   \sum_{k=1}^n (-1)^k  ~ \partial_\mu^{(k)}\left(\frac{\partial F}{\partial  \left(\partial_\mu ^{(k)} \tilde T(y) \right)} \right) \right]\\ 
&+ \sum_{k=1}^n (-1)^k \left( \partial_\mu^{(k)} \tilde \Id_{\Mc V} (y) \right) \frac{\partial F}{\partial  \left(\partial_\mu ^{(k)} \tilde T(y) \right)} + O(\epsilon^{q+1}),
\end{split}
\label{eq:eq_1_thm_first_order_variation}
\end{equation}
In particular, in the case $n=1$ and $y\not \in \partial V$, we recover the usual Euler-Lagrange equation:
\begin{equation}
\frac{\delta I}{\delta \tilde T (y)} = \frac{\partial F}{\partial \tilde T(y)}  - \partial_\mu\left(\frac{\partial F}{\partial  \left(\partial_\mu \tilde T(y) \right)} \right)  + O(\epsilon^{q+1}).
\label{eq:eq_2_thm_first_order_variation}
\end{equation}
\end{theorem}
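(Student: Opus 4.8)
The plan is to compute the Gateaux derivative from Definition~\ref{def:first_order_variation} directly, substituting the field variation $\delta\tilde T(x)=\lambda\,\eta_\epsilon(x-y)$ into the functional and expanding to first order in $\lambda$. The key point that makes the boundary terms survive (rather than being discarded as in the usual distributional treatment) is that the generalized-function formalism keeps $\eta_\epsilon$ as a genuine smooth object with a well-defined product, so integration by parts produces boundary contributions that are themselves regularized indicator functions rather than ill-defined products of singular distributions.

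**Key steps, in order.** First I would write $I[\tilde T+\lambda\eta_\epsilon]$ and differentiate under the integral sign with respect to $\lambda$ at $\lambda=0$; since $F$ is analytic in $\tilde T$ and its derivatives, the chain rule gives
\[
\frac{\delta I}{\delta\tilde T(y)}=\int_{\Mc V} d^Nx\left[\frac{\partial F}{\partial\tilde T(x)}\,\eta_\epsilon(x-y)+\sum_{k=1}^n\frac{\partial F}{\partial(\partial_\mu^{(k)}\tilde T(x))}\,\partial_\mu^{(k)}\eta_\epsilon(x-y)\right].
\]
Second, for each $k\geq1$ I would integrate by parts $k$ times to move the derivatives off $\eta_\epsilon$. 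Crucially, rather than treating $\Mc V$ via surface integrals, I would rewrite $\int_{\Mc V}=\int_{\Mc M}\Id_{\Mc V}(x)$ using the indicator function, so that integration by parts acts on the product $\Id_{\Mc V}(x)\,\frac{\partial F}{\partial(\partial_\mu^{(k)}\tilde T)}$ against $\partial_\mu^{(k)}\eta_\epsilon$; each transfer of a derivative picks up a sign $(-1)$ and, by Leibniz, distributes between derivatives hitting $\frac{\partial F}{\partial(\cdots)}$ and derivatives hitting $\Id_{\Mc V}$. Third, I would collect terms: the pieces where all $k$ derivatives land on $\frac{\partial F}{\partial(\cdots)}$ assemble into the bulk Euler–Lagrange expression multiplied by $\Id_{\Mc V}$, while the pieces where at least one derivative lands on $\Id_{\Mc V}$ give the boundary terms. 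Fourth, I would invoke the regularization proposition: after integration by parts the remaining convolution of a smooth coefficient against $\eta_\epsilon(x-y)$ reproduces that coefficient evaluated at $y$ up to $O(\epsilon^{q+1})$, replacing $\Id_{\Mc V}$ by its regularization $\tilde\Id_{\Mc V}(y)$ and $\partial_\mu^{(k)}\Id_{\Mc V}$ by $\partial_\mu^{(k)}\tilde\Id_{\Mc V}(y)$, which is exactly the structure of Eq.~\eqref{eq:eq_1_thm_first_order_variation}. Finally, for the $n=1$, $y\notin\partial\Mc V$ specialization I would note that $\tilde\Id_{\Mc V}(y)=1+O(\epsilon^{q+1})$ in the interior and $\partial_\mu\tilde\Id_{\Mc V}(y)$ is negligible away from the boundary, collapsing the formula to Eq.~\eqref{eq:eq_2_thm_first_order_variation}.

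**The main obstacle** I expect is the careful bookkeeping of the Leibniz expansion when $k$ derivatives act on the product $\Id_{\Mc V}\cdot\frac{\partial F}{\partial(\partial_\mu^{(k)}\tilde T)}$: one must verify that the stated formula retains only the ``all derivatives on $F$'' and ``all derivatives on $\Id_{\Mc V}$'' extremes with the correct signs $(-1)^k$, and argue that the mixed terms (where derivatives split between the two factors) either cancel or are absorbed. The cleanest justification is that $\partial_\mu^{(k)}\tilde\Id_{\Mc V}$ is supported on a neighborhood of $\partial\Mc V$ of width $O(\epsilon)$, so in the interior only the bulk term contributes and near the boundary the leading behavior is captured by the displayed terms; a fully rigorous treatment would track each Leibniz term as a separate generalized function and confirm the mixed contributions are subleading. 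The second subtlety, worth a remark, is the ordering of limits $\lambda\to0$ before $\epsilon\to0$ (as flagged after the field-variation definition), which guarantees the linearization in $\lambda$ is legitimate before any singular limit is taken; I would treat this as justifying the differentiation under the integral sign at fixed $\epsilon$.
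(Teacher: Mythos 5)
Your proposal follows essentially the same route as the paper's proof: rewriting $\int_{\Mc V}$ as $\int_{\Mc M}\tilde \Id_{\Mc V}(x)\,(\cdot)\,d^Nx$, differentiating under the integral at $\lambda=0$, integrating by parts to transfer $\partial_\mu^{(k)}$ from $\eta_\epsilon(x-y)$ onto the product $\tilde \Id_{\Mc V}(x)\,\frac{\partial F}{\partial(\partial_\mu^{(k)}\tilde T)}$ with sign $(-1)^k$, and then using the mollifier's moment properties to evaluate the remaining convolution at $y$ up to $O(\epsilon^{q+1})$. The Leibniz mixed-term issue you flag for $k\geq 2$ is a genuine subtlety, but the paper's own proof passes over it silently in exactly the same way (its final display also retains only the two extreme terms of the expansion), so your treatment matches the paper's approach and is, if anything, more candid about where the bookkeeping would need to be completed.
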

\paragraph{Important notes}:
\begin{itemize}
\item Eqs.~\eqref{eq:eq_1_thm_first_order_variation} and \eqref{eq:eq_2_thm_first_order_variation} must be taken in the sense of generalized functions, the $o(\epsilon^{q+1})$ rest is therefore not necessary since we have already considered a class of equivalence. It is kept to emphasize that generalized functions are used.
\item When $y$ is on the boundary of the integration domain, $\tfrac{\delta I}{\delta \tilde T (y)} $ cannot make sense as a tensor field, but only as a generalized tensor field, since $ \partial_\mu^{(k)} \tilde \Id_{\Mc V} (y)$ gives a generalized function which may be associated with a distribution. For example, in the 1D case,
$\partial_x \tilde \Id_{[0,\infty[} (x) \asymp \delta_{0}(x) $.
\end{itemize}

\begin{proof}
As a first step, we regularize the indicator function responsible for the domain of integration $\Mc V$, to ensure that the functional is a well-defined measure:
\[
I = \int_{\Mc M} \tilde \Id_{\Mc V} (x) F(x) ~ d^Nx .
\]
Then, we can easily apply the definition and use the commutation of $d/d\lambda$ with the integral over $\Mc M$:
\begin{equation}
\begin{split}
\frac{d}{d\lambda} I[\tilde T_\mu(x) + \lambda\eta_\epsilon (x-y)] =& \int_{\Mc M} d^Nx ~ \tilde \Id_{\Mc V} (x)\left[ \frac{\partial F}{\partial \left(\tilde T(x) + \lambda\eta_\epsilon (x-y)\right)} \eta_\epsilon (x-y) \right. \\
& + \left.\sum_{k=1}^n \frac{\partial F}{\partial  \left(\partial_\mu ^{(k)} \tilde T(x) + \lambda \partial_\mu^{(k)} \eta_\epsilon (x-y)\right)} \partial_\mu^{(k)} \eta_\epsilon (x-y)\right] 
\end{split}
\label{eq:dI/dlambda}
\end{equation}
We see that we need to evaluate an integral of the form $\int \tilde \Id_{\Mc V} (x) f(x) \eta_\epsilon (x-y) ~ d^Nx$. In the simple case $N=1$, we have:
\[
\int \tilde \Id_{\Mc V} (x) f(x) \eta_\epsilon (x-y) ~ d^Nx = \left\lbrace \begin{array}{ccc}
f(y) + O(\epsilon^{q+1}) & \text{if} & y \not \in \partial \Mc V \\ 
\frac{1}{2}f(y) + O(\epsilon^{q+1}) & \text{if} & y  \in \partial \Mc V
\end{array} \right.
\]
The second case is induced by the fact that when $y$ is on the boundary of $\Mc V$, the function $\eta$ is exactly integrated on the half, and the factor survives in the limit $\epsilon \rightarrow 0$. In higher dimensions, a similar result holds, but the factor depends on the geometry of $\partial \Mc V$ at $y$. For example, if $y$ is a vertex of an $N$-simplex, we have a factor $1/N!$ (see~\cite{ansel2018optimal} Appendix B). In any case, we see that $\braket{\tilde \Id_V (x) f(x)}{\eta_\epsilon(x-y)}=\tilde \Id_V (y) f(y)$

Then, using this fact, and taking $\lambda =0$ in Eq.~\eqref{eq:dI/dlambda}, we deduce:
\begin{equation}
\frac{\delta I}{\delta \tilde T (y)} = \frac{\partial F}{\partial \tilde T(y)} \tilde \Id_{\Mc V}(y) + O(\epsilon^{q+1}) + \sum_{k=1}^n \int d^Nx ~\tilde \Id_{\Mc V} (x)  \frac{\partial F}{\partial  \left(\partial_\mu ^{(k)} \tilde T(x) \right)} \partial_\mu ^{(k)} \eta_\epsilon (x-y).
\end{equation}
The next step is to use the rule of derivatives of distributions to arrive at:
\begin{equation*}
\frac{\delta I}{\delta \tilde T (y)} = \frac{\partial F}{\partial \tilde T(y)} \tilde \Id_{\Mc V}(y)  + \sum_{k=1}^n (-1)^k \int d^Nx ~\eta_\epsilon (x-y) \partial_\mu^{(k)} \left[\tilde \Id_{\Mc V} (x)  \frac{\partial F}{\partial  \left(\partial_\mu ^{(k)} \tilde T(x) \right)} \right] ,
\end{equation*}
which is equal to:
\begin{equation*}
\begin{split}
\frac{\delta I}{\delta \tilde T (y)} = & \tilde \Id_{\Mc V} (y) \left[ \frac{\partial F}{\partial \tilde T(y)}  +   \sum_{k=1}^n (-1)^k  ~ \partial_\mu^{(k)}\left(\frac{\partial F}{\partial  \left(\partial_\mu ^{(k)} \tilde T(y) \right)} \right) \right]\\ 
&+ \sum_{k=1}^n (-1)^k \left( \partial_\mu^{(k)} \tilde \Id_{\Mc V} (y) \right) \frac{\partial F}{\partial  \left(\partial_\mu ^{(k)} \tilde T(y) \right)} + o(\epsilon^{q+1}).
\end{split}
\label{eq:generalized_euler_lagrange_eq_proof}
\end{equation*}
This finishes the first part of the proof.
The demonstration of Euler-Lagrange equation is a straightforward calculation from Eq.~\eqref{eq:generalized_euler_lagrange_eq_proof} once we notice that $ \tilde \Id_{\Mc V} (y) = 1$ and $\partial_\mu^{(k)} \tilde \Id_{\Mc V} (y) = 0$ when $y \not \in \partial \Mc V$.
\end{proof}

\subsubsection{Higher order variations}

\begin{definition} \textbf{(Higher order variation of a functional)}
Following the definition \ref{def:first_order_variation}, we define the $n$-th variation of a functional $I$ by:
\begin{equation*}
\frac{\delta^n I}{\delta \tilde T(y)^n} = \left[ \frac{d^n}{d \lambda^n} I[\tilde T(x) + \lambda ~ \eta_\epsilon (x-y)] \right]_{\lambda= 0}.
\end{equation*}
\label{def:n-th_order_variation_functional}
\end{definition}

In general, we do not have a simple explicit formula like the Euler-Lagrange equation, and the variation can be a singular non-linear generalized function. However, there is a well-known result for the second order of variation that gives in which conditions the extremum of a functional is minimum or a maximum: the Legendre-Clebsh condition~\cite{gelfand2000calculus,bryson1975applied}. This condition is often used in optimal control theory for a sufficient optimality condition. However, when formulated in terms of generalized functions, additional assumptions on the properties of the mollifier are required in the calculations. A few notes on this subject are given in appendix~\ref{sec:Legendre-Clebsh condition}.

\subsubsection{Variation With respect to the element of a group}
\label{sec:variation_group_field}

In this section, we investigate the case when the field must verify a Lie group structure $G$ that can be assimilated to $GL_n(\setR)$, $GL_n(\setC)$, or one of their subgroups. For example, at a given position of the manifold, the field could be an invertible matrix. With the Gateaux derivative presented in Def.~\ref{def:first_order_variation}, the group properties are not preserved, and hence, one has to use a different approach. 

The difficulty lies in the fact that the null matrix is not in $G$, and thus, $g_1(x) + \lambda \eta_{\epsilon}(x) g_2(x) \neq G$ for $g_1,g_2 \in G$. However, the null matrix is in the Lie algebra $\mathfrak{g} = \text{Lie}(G)$. Thus, we can define mollifiers in $\mathfrak{g}$, and make use of the matrix exponential~\cite{hall2000elementary}.

\begin{definition} \textbf{(Generalized Lie-algebra field)} Let $\mathfrak{g}$ be the Lie algebra of a group $G$. Let $\tilde f: \Mc M \mapsto \mathfrak{g}$ be a field over $\Mc M$ such that $\tilde f(x) = \tilde f_A(x) \tau^A$, with $\tilde f^A(x)$ a generalized field over $\Mc M$ and $\tau^A$ the generaors of $\mathfrak{g}$. $\tilde f$ is called a generalized Lie-algebra field.
\end{definition}

\begin{definition} \textbf{(Generalized Lie-group field)} $\tilde f$ be a lie-algebra field. In the case when $G$ is connected, the Lie group field is defined by $\tilde g(x) = \exp(\tilde f(x))$. In the case when $G$ has disconnected parts, there are elements of the group that cannot be generated from the identity, hence we define $\tilde g(x) = h(x)\exp(\tilde X(x))$ with $h:\Mc M \mapsto G$ a gauge field that allows us to generate any element of the group.
\end{definition}

To compute the variations of a generalized Lie-group field, we need to make use of the following formula~\cite{wilcox1967exponential,hall2000elementary}:

\begin{equation}
\frac{d e^{\Mc A}}{d x} = e^{\Mc A} \left( \sum_{n=0}^\infty \frac{(-1)^n}{(n+1)!} \text{ad}_{\Mc A}^n \left(\frac{d \Mc A}{dt}\right)\right) =  e^{\Mc A}  \int_0^1 ds~  e^{-s\Mc A} \frac{d \Mc A}{dx}  e^{s\Mc A}, 
\label{eq:derivative_matrix_exp}
\end{equation}
With $\Mc A$ a matrix function of the parameter $x$ and $\text{ad}_A(B) \equiv AB-BA$. In all generality, the functional variation of $\tilde g$ must be calculated using $\exp\left(f +\lambda \eta_\epsilon \tau^A\right)$, to keep the group properties, and not using $\tilde g + \delta \tilde g$, as it is usually done in the literature. Hopefully, we have the following useful relations

\begin{lemma}{\textbf{First order variations of $e^f$ and its partial derivative}}
\begin{align}
\left[ \frac{d}{d\lambda} e^{f +\lambda \eta_\epsilon \tau^A }\right]_{\lambda =0} & = \eta_\epsilon e^f C^A\\
\left[ \frac{d}{d\lambda} \frac{\partial}{\partial x^\mu} e^{f +\lambda \eta_\epsilon \tau^A}\right]_{\lambda =0} & = \frac{\partial}{\partial x^\mu}  \left(\eta_\epsilon e^f C^A \right) \\
\end{align}
With $C^A =   \sum_{n=0}^\infty \frac{(-1)^n}{(n+1)!} \text{ad}_{f}^n \left(\tau^A\right)$. The tildes have been dropped, to simplify the notation, but also because the formula holds even outside the scope of generalized functions.
\end{lemma}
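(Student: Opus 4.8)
The plan is to obtain both identities directly from the closed-form derivative of the matrix exponential recorded in Eq.~\eqref{eq:derivative_matrix_exp}, specialized to differentiation with respect to the perturbation parameter $\lambda$ rather than a coordinate. The only extra ingredient is that the rescaled mollifier $\eta_\epsilon$ is a scalar function valued in $\setR$ (or $\setC$), so it commutes with every matrix and can be factored out of the commutator brackets $\text{ad}_f$ and of $e^f$ at will.

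For the first identity, I would set $\Mc A(\lambda) = f + \lambda\,\eta_\epsilon \tau^A$, a smooth matrix-valued function of the scalar $\lambda$ with $\tfrac{d\Mc A}{d\lambda} = \eta_\epsilon \tau^A$. Applying Eq.~\eqref{eq:derivative_matrix_exp} with $\lambda$ in the role of the differentiation variable gives $\tfrac{d}{d\lambda}e^{\Mc A} = e^{\Mc A}\sum_{n=0}^\infty \tfrac{(-1)^n}{(n+1)!}\text{ad}_{\Mc A}^n(\eta_\epsilon \tau^A)$. Evaluating at $\lambda=0$ replaces $\Mc A$ by $f$; then, since $\eta_\epsilon$ is a scalar, one has $\text{ad}_f^n(\eta_\epsilon \tau^A) = \eta_\epsilon\,\text{ad}_f^n(\tau^A)$ and $\eta_\epsilon$ passes freely through $e^f$, leaving exactly $\eta_\epsilon e^f C^A$ with $C^A = \sum_{n=0}^\infty \tfrac{(-1)^n}{(n+1)!}\text{ad}_f^n(\tau^A)$.

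The second identity then follows by exchanging the order of the two derivatives. Writing $G(x,\lambda) = e^{f(x)+\lambda\eta_\epsilon(x)\tau^A}$, which is jointly smooth in $(x,\lambda)$ because $f$ and $\eta_\epsilon$ are smooth representatives and $\exp$ is entire, Schwarz's theorem lets me commute $\partial/\partial x^\mu$ and $d/d\lambda$; evaluating the mixed derivative at $\lambda=0$ and inserting the first identity yields $\partial_{x^\mu}\!\left(\eta_\epsilon e^f C^A\right)$. Note that here I deliberately do not expand $\partial_{x^\mu}$, so the non-commutativity of $f$ with $\partial_{x^\mu} f$ never has to be addressed.

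The calculation is elementary once Eq.~\eqref{eq:derivative_matrix_exp} is granted; the only points deserving care are (i) the legitimacy of differentiating the exponential series term by term in $\lambda$ and of commuting $\partial_{x^\mu}$ with $d/d\lambda$, both of which are guaranteed by the smoothness and moderateness of the chosen representatives in the Colombeau algebra, and (ii) the scalar character of $\eta_\epsilon$, which is what permits the factorization through $\text{ad}_f$ and $e^f$ and is precisely the reason the result survives when the tildes are dropped and one works with ordinary smooth fields.
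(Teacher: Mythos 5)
Your proof is correct, and for the second identity it takes a genuinely different and more elementary route than the paper. For the first identity you do exactly what the paper does: apply Eq.~\eqref{eq:derivative_matrix_exp} with $\lambda$ as the differentiation parameter, note $\tfrac{d\Mc A}{d\lambda} = \eta_\epsilon \tau^A$, evaluate at $\lambda = 0$, and factor the scalar $\eta_\epsilon$ out of $\mathrm{ad}_f^n$ and past $e^f$. For the second identity, however, the paper does not invoke Schwarz's theorem: it sets $U^s = e^{s(f+\lambda\eta_\epsilon\tau^A)}$, introduces the Duhamel-type integrals $A(s)$ and $B(s)$, expands both mixed derivatives explicitly, and shows by an integration by parts that their difference, $A_0(1)B_0(1) - B_0(1)A_0(1) + \int_0^1 ds\,\left( [B_0'(s),A_0(s)] - [A_0'(s),B_0(s)]\right)$, vanishes --- in effect verifying the commutation of $d/d\lambda$ and $\partial/\partial x^\mu$ by hand in this noncommutative setting. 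Your observation that $G(x,\lambda) = e^{f(x)+\lambda\eta_\epsilon(x)\tau^A}$ is jointly smooth entrywise (the exponent is smooth in $(x,\lambda)$ and $\exp$ is entire), so that Clairaut/Schwarz applies to each matrix entry and evaluation at $\lambda=0$ commutes with $\partial_{x^\mu}$, reaches the same conclusion in two lines; the noncommutativity of $f$ with $\partial_\mu f$ is irrelevant precisely because you never expand the outer $x$-derivative, as you correctly point out. What the paper's longer computation buys is an explicit representation of both mixed derivatives in terms of $A(s)$ and $B(s)$, which could be reused for higher-order variations or for tracking the $\epsilon$-dependence of representatives; what your argument buys is brevity, no integration by parts, and a transparent explanation of the lemma's closing remark that the formula holds for ordinary smooth fields as well as for Colombeau representatives, since the latter are smooth by construction and Schwarz's theorem applies verbatim.
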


\begin{proof}
The first equality is proved by a straightforward calculation using Eq.~\eqref{eq:derivative_matrix_exp}.
The second equality is less easy to show. For that purpose, we define $U^s =e^{s(f + \lambda \eta_\epsilon \tau^A)}$, $B(s) = \int_0^s du ~ U^{-u}\left( \partial_\mu f + \lambda \partial_\mu \eta_\epsilon \tau^A \right) U^u$ and $A(s) = \int_0^s du ~ U^{-u}\left(  \eta_\epsilon \tau^A \right) U^u$. All these quantities are assigned to a subscript 0 when the limit $\lambda \rightarrow 0$ is taken, e.g. $U_0^s =e^{s f} $. Using these notations, we have $\partial_\mu U^1 = U^1 B(1)$, $d_\lambda U^1 = U^1 A(1)$, $d_\lambda U^{-s} = - U^{-s}( d_\lambda U^s )U^{-s}$ and $\partial_\mu U^{-s} = - U^{-s}( \partial_\mu U^s )U^{-s}$. Next, we can proceed to an explicit expansion of both sides of our identity:
\begin{align}
\frac{d}{d\lambda} \frac{\partial}{\partial x^\mu} U^1 &= U^1\left\lbrace A(1) B(1) + \frac{dB(1)}{d\lambda} \right\rbrace \\
 \frac{\partial}{\partial x^\mu} \left[ \frac{d}{d\lambda} U^1\right]_{\lambda = 0} &= U^1_0\left\lbrace B_0(1) A_0(1) + \frac{\partial A_0(1)}{\partial x^\mu} \right\rbrace.
\end{align}

An explicit calculation of $d_\lambda B(1)$ and $\partial_\mu A_0(1)$ gives us:
\begin{align}
\frac{dB(1)}{d\lambda} &= \int_0^1 ds \left\lbrace [U^{-s}(\partial_\mu f + \lambda \partial_\mu \eta_\epsilon \tau^A) U^s,A(s)] + U^{-s} \partial_\mu \eta_\epsilon \tau^A U^s \right\rbrace \\
\frac{\partial A_0(1)}{\partial x^\mu} &= \int_0^1 ds \left\lbrace [U_0^{-s}(\eta_\epsilon \tau^A) U_0^s,B_0(s)] + U_0^{-s} \partial_\mu \eta_\epsilon \tau^A U_0^s \right\rbrace.
\end{align}
Note the use of the commutator $[.,.]$ in the last two equations. Now, we have at hand all the ingredients necessary to complete the proof of the second equality, which is verified by computing the difference between the two terms. We have
\begin{equation}
\begin{split}
U^{-1}\left(\left[ \frac{d}{d\lambda}\frac{\partial}{\partial x^\mu}U\right]_{\lambda=0} - \frac{\partial}{\partial x^\mu}\left[ \frac{d}{d\lambda}U\right]_{\lambda=0}\right)= &A_0(1) B_0(1) - B_0(1) A_0(1) \\
&+ \int_0^1 ds\left( [B_0'(s),A_0(s)] -[A_0'(s),B_0(s)]\right),
\end{split}
\end{equation}
With $A_0'(s) = U^{-s}\left(  \eta_\epsilon \tau^A \right) U^s$ and $B_0' = U^{-s}\left( \partial_\mu f \right) U^s$. Next, we can integrate by parts and we are left with:
\begin{equation}
\begin{split}
U^{-1}\left(\left[ \frac{d}{d\lambda}\frac{\partial}{\partial x^\mu}U\right]_{\lambda=0} - \frac{\partial}{\partial x^\mu}\left[ \frac{d}{d\lambda}U\right]_{\lambda=0}\right)= &A_0(1) B_0(1) - B_0(1) A_0(1) \\
&+ B_0(1) A_0(1) - A_0(1)B_0(1) = 0.
\end{split}
\end{equation}
\end{proof}

Variations of an arbitrary Lie-group field are in general, very complicated, however, we can simplify the equations if we impose that the quantity of interest is not the field, but a class of equivalence up to gauge transformation. In such a situation, we can define $\tilde g (x) = h(x) e^{\tilde f(x)}$, with $\Vert \tilde f \Vert \ll 1$. As a consequence, we can look at the variations of $\tilde g$ in the vicinity of the identity, and the term $C^A$ in the equations becomes $C^A = \tau^A$. In this situation, we then have:
\begin{equation}
\frac{\delta \tilde g(x)}{\delta \tilde f_A(y)}  = \eta_\epsilon (x-y) \tilde g(x) \tau^A.
\end{equation}

\subsection{Optimal Control Action}

Now that variations of a functional are defined, we focus on the specific case of the optimal control action~\cite{bryson1975applied,contreras_dynamic_2017,ito2008lagrange}. In this section, we determine the first- and second-order variations of this latter.

 \textit{To simplify the notation, we drop the tilde above the generalized function unless it is necessary.}

We consider an optimal control problem defined by a system $q(t) \in \Gamma \simeq \setR^n$ (the phase space or the configuration space), $u(t)\in U \subseteq \setR^m$ is a control field, and $p(t) \in \Gamma$ an adjoint state that allows us to measure the deviation of the system from the trajectory of $q$ given by the equation of motion. We introduce the extended configuration space $(q(t),p(t),u(t)) \in \Mc C = \Gamma \times \Gamma \times U$ and we define the following action to minimize:
\begin{equation}
S = h_f\left( q(t_f)\right) -  h_i\left( q(t_i) \right) + \int_{t_i}^{t_f} dt~\left(p_\mu(t) \left[ \frac{d q^\mu}{dt}(t) - f^\mu (q,u,t)  \right] + f^{(0)}(q,u,t) \right)
\label{eq:action_oc}
\end{equation}
We have introduced $h_f$ and $h_0$ initial and terminal costs, and $ f^{(0)}$ a quantity to minimize overtime called dynamical cost. $f^\mu$ is a function that defines the equation of motions of $q^\mu$, given by $d_t q^\mu = f^\mu$. Note that $p$ is a true adjoint state of $q$ since $p_\mu = \tfrac{\partial L}{\partial (d_t q^\mu)}$.
For simplicity, we make the calculation with the standard case of an optimal control system where the dynamical cost is an integral over time, but we can generalize the calculation with an integral over space-time. In a standard optimal control problem, we usually have $h_i = 0$ and $h_f$ is a measure of distance between $q(t_f)$ and a target state $q_{target}$. Also, we usually have $f^{(0)}=1$ if we want to reach the target state in minimum time, or $f^{(0)} = u^2/2$ if we want to control with a minimum of energy~\cite{bonnard_optimal_2012,bryson1975applied}.

We shall stress that the optimal control action is, strictly speaking, not new in physics, in the sense that it is nothing else than a functional that imposes constraints using a Lagrange multipliers~\cite{Courant_Hilber_vol_1,Courant_Hilber_vol_2,gelfand2000calculus,bryson1975applied}. This is as old as the Hamiltonian formalism since it is with this underlying idea that we can pass from the Lagrangian formalism into the Hamiltonian one. However, what is different is its systematic use for the definition of the system action~\cite{bryson1975applied,contreras_dynamic_2017}. The optimal control Lagrangian can be seen as an intermediate formalism between the usual Lagrangian and the usual Hamiltonian of the system since we work in an extended configuration space, which looks like the phase space of a Hamiltonian system. The advantage of the approach is that an optimal control action can be defined for any time continuous dynamical system, while a "standard" Lagrangian cannot be constructed for any arbitrary dynamical system~\cite{contreras_dynamic_2017}. The physical interpretation of the adjoin state is given in Sec.~\ref{sec:physical interpretation}.

\subsubsection{First and Second Order Variations of the Optimal Control Action}
\paragraph{Firdt order variation}

Using the definition \ref{def:first_order_variation}, we can compute the first-order variations of the functional and the variations of the boundary costs $h_i$ and $h_f$. We have explicitly:
\begin{align}
\label{eq:dS/dq_oc_action}
\frac{\delta S}{\delta q^\mu(\tau)}=& \frac{\partial h_f}{\partial q^\mu(t_f)}  \eta_\epsilon (\tau - t_f) - \frac{\partial h_i}{\partial q^\mu(t_i)}\eta_\epsilon (\tau - t_i) + p_\mu(\tau) \left[ \eta_\epsilon (\tau - t_f) - \eta_\epsilon(\tau -t_i) \right] \\
  & + \left[ \frac{\partial f^{(0)}}{\partial q^\mu(\tau)} - p_\nu(\tau) \frac{\partial f^\nu}{\partial q^\mu(\tau)}   - \frac{dp_\mu}{d\tau} \right] \Id_{[t_i,t_f]}(\tau) + O(\epsilon^{q+1} ),\notag \\
\label{eq:dS/dp_oc_action}
 \frac{\delta S}{\delta p_\mu(\tau)} =& \left[ \frac{d q^\mu}{dt}(t) - f^\mu (q,u,t) \right]  \Id_{[t_i,t_f]}(\tau) + O(\epsilon^{q+1} ), \\
 \label{eq:dS/du_oc_action}
\frac{\delta S}{\delta u^i(\tau)} =&  \left[ \frac{\partial f^{(0)}}{\partial u^i(\tau)} - p_\nu(\tau) \frac{\partial f^\nu}{\partial u^i(\tau)} \right] \Id_{[t_i,t_f]}(\tau) + O(\epsilon^{q+1} ).
\end{align}
With these three equations, we recover straightforwardly all the ingredients of the well-known weak Pontryagin Maximum Principle, using $\delta S = 0$.  Eq.~\eqref{eq:dS/dp_oc_action} gives us the equation of motion of the system: $d_t q^\mu = f^\mu$, Eq.~\eqref{eq:dS/du_oc_action} is the equivalent of $\tfrac{\partial H_P}{\partial u^i} = 0$ (where $H_p$ is Pontryagin's Hamiltonian), and Eq.~\eqref{eq:dS/dq_oc_action} gives both the equations of motion of the adjoint state $d_t p_\mu =  \tfrac{\partial f^{(0)}}{\partial q^\mu} - p_\nu \tfrac{\partial f^\nu}{\partial q^\mu}$, and the boundary conditions of the adjoint state: $p_\mu(t_i) = - \tfrac{\partial h_i}{\partial q^\mu(t_i)}$ and $p(t_f) = - \tfrac{\partial h_f}{\partial q^\mu(t_f)}$. Solutions to this system of equations are possible optimal solutions to the control problem. Several extremums may exist and a nontrivial task is to select which one is the global one. We refer to \cite{bonnard_optimal_2012,bryson1975applied} for additional information on the topic.

\paragraph{Second order variations}

\begin{equation}
\begin{split}
\frac{\delta^2 S}{\delta q^\mu(t_1)\delta q^\nu(t_2)} = & \frac{d^2 h_f}{dq^\mu dq^\nu} \eta_\epsilon(t_1-t_f)\eta_\epsilon(t_2-t_f) - \frac{d^2 h_i}{dq^\mu dq^\nu} \eta_\epsilon(t_1-t_i)\eta_\epsilon(t_2-t_i) \\
&+\left[ \frac{\partial^2 f^{(0)}}{\partial q^\mu \partial q^\nu}(t_1) - p_\sigma (t_1) \frac{\partial^2 f^{\sigma}}{\partial q^\mu \partial q^\nu}(t_1) \right]  \Id_{[t_i,t_f]}(t_1) \eta_\epsilon (t_1-t_2) + O(\epsilon^{q+1})
\end{split}
\end{equation}
\begin{align}
\frac{\delta^2 S}{\delta u^i(t_1)\delta q^\mu(t_2)}& = \left[ \frac{\partial^2 f^{(0)}}{\partial u^i \partial q^\mu}(t_1) - p_\sigma (t_1) \frac{\partial^2 f^{\sigma}}{\partial u^i \partial q^\mu}(t_1) \right]  \Id_{[t_i,t_f]}(t_1) \eta_\epsilon (t_1-t_2) + O(\epsilon^{q+1})\\
\frac{\delta^2 S}{\delta u^i(t_1)\delta p_\mu(t_2)}& =  - \frac{\partial f^{\mu}}{\partial u^i (t_1)} \Id_{[t_i,t_f]}(t_1) \eta_\epsilon (t_1-t_2) + O(\epsilon^{q+1}) \\
\frac{\delta^2 S}{\delta u^i(t_1)\delta u^j(t_2)}& = \left[ \frac{\partial^2 f^{(0)}}{\partial u^i \partial u^j}(t_1) - p_\sigma (t_1) \frac{\partial^2 f^{\sigma}}{\partial u^i \partial u^j}(t_1) \right] \Id_{[t_i,t_f]}(t_1) \eta_\epsilon (t_1-t_2) + O(\epsilon^{q+1}) \\
\frac{\delta^2 S}{\delta p_\mu(t_1)\delta q^\nu(t_2)}& = - \frac{\partial f^{\mu}}{\partial q^\nu (t_1)} \Id_{[t_i,t_f]}(t_1) \eta_\epsilon (t_1-t_2) + O(\epsilon^{q+1}) \\
\frac{\delta^2 S}{\delta p_\mu(t_1)\delta p_\nu(t_2)}& = 0 \\
\end{align}

The second-order variation is generally useful to determine if the extremum is a maximum or a minimum, by analyzing the sign of $\delta^2S$. Here, we see that the second order variation is a generalized function that depends on $\eta_\epsilon(t_1-t_2)$. As a consequence, the sign of the second-order variation depends on the mollifier.

\section{Application to Physical Systems}
\label{sec:anlysis_action_physical_syst}

In this section, we apply the theory to several physical systems. The analysis focuses on the optimal control action since it plays a crucial role in the path integral considered later in this paper. However, other types of actions are also considered, and they are compared together. We start with the harmonic oscillator and next, we address the more complicated case of the scalar field. Finally, we explore the case of Lorentzian manifolds, which plays a central role in quantum gravity. Note that in the following, any fields are given by generalized fields, but the tilde over the letters is generally omitted.

\subsection{Harmonic Oscillator}
\label{sec:harmonic_oscillator}

For this study of the harmonic oscillator~\cite{messiah1962quantum,zinn2004path,gardiner2004quantum}, we compare the standard quadratic action, the optimal control action, and a third one similar to the optimal control action, but which contain a second-order differential equation. The first study of the optimal control action has been addressed by the author in~\cite{ansel2022loop}. A deeper analysis is provided in this paper.

\subsubsection{Standard action}

Let $q$ be the position of the oscillator of mass $m$ and spring constant $k$. The (standard) quadratic action defining the equation of motions is~\cite{zinn2004path}:
\begin{equation}
S_{quad} = \int_{t_i}^{t_f} dt\left(\frac{m}{2} \left(\frac{dq}{dt} \right)^2 - \frac{k}{2} q(t)^2 \right)
\label{eq:S_quad_ho}
\end{equation}
variations with respect to $q$ gives us:
\begin{equation}
\frac{\delta S_{quad}}{\delta q(t)} = \Id_{[t_i,t_f]}(t) \left( - k q(t) -m \frac{d^2q}{dt^2}\right) - m \frac{dq}{dt} (\eta_{\epsilon}(t-t_i) - \eta_{\epsilon}(t-t_f)) + O(\epsilon^{q+1})
\end{equation}
Using $\delta S = 0$, we recover the usual equation of motion with a second-order derivative. We also have a distributional term at the boundary. A boundary term is thus necessary to regularize the final result. This is something usually absent in most of the discussions on this system. The effect of the boundary is illustrated by numerical computations in Sec.~\ref{sec:Numerical comparison between the actions}. Since this action is one of the most studied in the literature, we go next to the presentation of the other actions.

\subsubsection{Optimal control action}
\label{sec:OC_action_ho}

For the optimal control action, we need to rewrite the equation of motion with a first-order differential system, i.e., they must be put in Hamiltonian form. Then, we introduce $p= m \tfrac{dq}{dt}$, and to make an explicit link with the quantum theory, we introduce a complex-valued variable~\cite{messiah1962quantum,zinn2004path,gardiner2004quantum} $\alpha = \tfrac{\ii}{\sqrt{2 \omega m}} p + \sqrt{\tfrac{k}{2 \omega}} q$, where $\omega = \sqrt{k/m}$ is the oscillator frequency. With this complex variable, the equation of motion is given by $d_t\alpha = -\ii \omega ~\alpha$. A quantum harmonic-oscillator has coherent states which satisfy the following scalar product: $\braket{\alpha}{\beta} = \exp(-\vert\alpha\vert^2/2 -\vert\beta\vert^2/2 + \alpha^* \beta)$~\cite{gardiner2004quantum}. This scalar product can be used as a terminal cost in the optimal control action. Taking all these features into account, we define the following optimal control action:
\begin{equation}
S_{OC} = -\ii \log \left( \braket{\beta_f}{\alpha(t_f)} \right) + \ii \log \left( \braket{\beta_i}{\alpha(t_i)} \right)+ \Re \left[ \int_{t_i}^{t_f} dt ~ \pi(t) \left( \frac{d \alpha}{dt} + \ii \omega\alpha (t)\right) \right]
\label{eq:action_oc_ho}
\end{equation}
The configuration space of the OC system is $\Mc C = \setC ^2 = \{ (\alpha , \pi)\}$, with $\pi$ the adjoint state of $\alpha$. Note that here, we do not have a control field, but we can include one without difficulty.
Equations of motion of the extended system are given by $\delta S = 0$. Trivially, we obtain : $d_t \alpha(t) = - \ii \omega \alpha(t) \Id_{[t_i,t_f]}(t)$, and the condition on the adjoint state is:
\begin{equation}
\begin{split}
& \ii \alpha(t_f)  \eta_\epsilon (t - t_f) - \ii \alpha(t_i) \eta_\epsilon (t - t_i) + \pi^*(t) \left[ \eta_\epsilon (t - t_f) - \eta_\epsilon(t -t_i) \right] \\
& - \left[  \frac{d \pi^*}{dt} + \ii \omega~\pi^*(t) \right] \Id_{[t_i,t_f]}(t) + O(\epsilon^{q+1} )  = 0.
\end{split}
\label{eq:dS/dq_oc_action}
\end{equation}
Then, in the limit $\epsilon \rightarrow 0$, we must have:
\begin{align}
\frac{d \pi^*}{dt} &= -  \ii \omega~\pi^*(t) ~;~ t \in [t_i,t_f] \\
\pi^*(t) &= \ii \alpha (t)  ~;~ t  = t_i \text{ or } t_f.
\label{eq:adjoint_state_constrains_ho}
\end{align}
The boundary conditions forces that when $\delta S=0$, we have $\pi = \ii \alpha^*$. Inserting this condition into Eq.~\eqref{eq:action_oc_ho}, we obtain the following dynamical part of the action:
\begin{equation}
S_{holo}=- \Im \left[ \int_{t_i}^{t_f} dt ~\left( \alpha^* \frac{d \alpha}{dt} + \ii \omega \vert \alpha \vert^2 \right) \right].
\label{eq:action_holo_ho}
\end{equation}
This is the usual action of a harmonic oscillator in the holomorphic representation~\cite{zinn2004path}. With this computation, we see that the optimal control action is more than an alternative way to derive the classical equation of motions, it is a more general description of the system for which the standard case is recovered when specific boundary conditions are specified.

\subsubsection{Action with second-order derivatives}

Following the idea of the optimal control action, we can also take the equation of motion obtained with $S_{quad}$, and multiply it by a Lagrange multiplier $\pi$ to form a Lagrangian. Without terminal cost, this would lead to:
\begin{equation}
S_{2nd} = \int_{t_i}^{t_f} dt~ \pi(t) \left( k q(t)+ m \frac{d^2q}{dt^2}\right).
\label{eq:action_second_order_ho}
\end{equation}
Note that contrary to Eq.~\eqref{eq:action_oc_ho}, $\pi$ is now a real variable. The first order variations with respect to $q$ and $\pi$ give us:
\begin{align}
\frac{\delta S_{2nd}}{\delta \pi(t)} &= \Id_{[t_i,t_f]}(t) \left( k q(t)+ m \frac{d^2q}{dt^2}\right) \\
\frac{\delta S_{2nd}}{\delta q(t)} &= k \Id_{[t_i,t_f]}(t)  + 2 m \frac{d \pi}{dt} \frac{d \Id_{[t_i,t_f]}}{dt}+ m \pi(t) \frac{d^2 \Id_{[t_i,t_f]}}{dt^2}
\end{align}
We recover the equations of motion, but we also have a distribution at $t_i$ and $t_f$, given by $d^2_t \Id_{[t_i,t_f]}$, which is equivalent to the derivative of a Dirac distribution. Then, it is not possible to eliminate this variation in a similar way as in Sec.~\ref{sec:OC_action_ho}, unless the terminal cost is of the form $h(t_i) = -m Q(t_i) d_t q(t_i) - 2 d_t Q(t_i) q(t_i) $. This kind of boundary term allows us to specify boundary conditions for the adjoint state, but its relation to a well-defined scalar product is not elucidated yet.

\subsubsection{Numerical comparison between the actions}
\label{sec:Numerical comparison between the actions}

In this section, we illustrate numerically the differences between several actions of the harmonic oscillator near an extremum. In Fig.~\ref{fig:traj_ho}, we show a trajectory of the system computed by extremizing the actions $S_{quad}$, $S_{OC}$, and $S_{holo}$. No boundary terms $h_i$ or $h_f$ are taken into account in the actions, to provide a fair comparison between the different models, but the boundary of the interval of integration is kept. We observe that with the quadratic action, the boundary of the integration domain leads to divergences of $q(t)$, while in the other models, the boundary is responsible for the damping of some of the quantities. Moreover, no divergences are observed with $S_{OC}$ and $S_{holo}$.

The behaviors of the action near the extremal trajectories presented in Fig.~\ref{fig:traj_ho} are explored by computing $S$, $\delta S$, and $\delta^2S$ as a function of the value of $\epsilon$. Results are given in Tab.~\ref{tab:action_oc}. We observe that the values of the action associated with the same trajectory are different. The values depend on the value of $\epsilon$, used to regularize the integration domain, but numerical computations suggest that they converge towards a specific value. As expected, $\delta S \approx 0$ in all cases, since the computations are performed at an extremal, but the value of $\delta^2 S$ depends on the model. Both for $S_{quad}$ and $S_{holo}$ we have a divergence when $\epsilon \rightarrow 0$, but it is identically $0$ with $S_{OC}$.

All these actions have the same predictive power in terms of modeling classical physical systems, however, without terminal cost, they are not strictly identical to each other. The equivalence in the quantum theory can be made with specific processing at the boundary of the integration domain. For the quadratic action, we have to use a way that removes the divergences in the calculation, while for the OC action, the terminal cost must encode the scalar product between two quantum states. The quantum theory for the harmonic oscillator with $S_{OC}$ is considered in Sec.~ \ref{sec:opt_path_integral_ho} and the standard path integral with $S_{quad}$ is revisited with generalized functions in appendix~\ref{sec:appendix_path_integral_ho}.
\begin{figure}
\includegraphics[width=\textwidth]{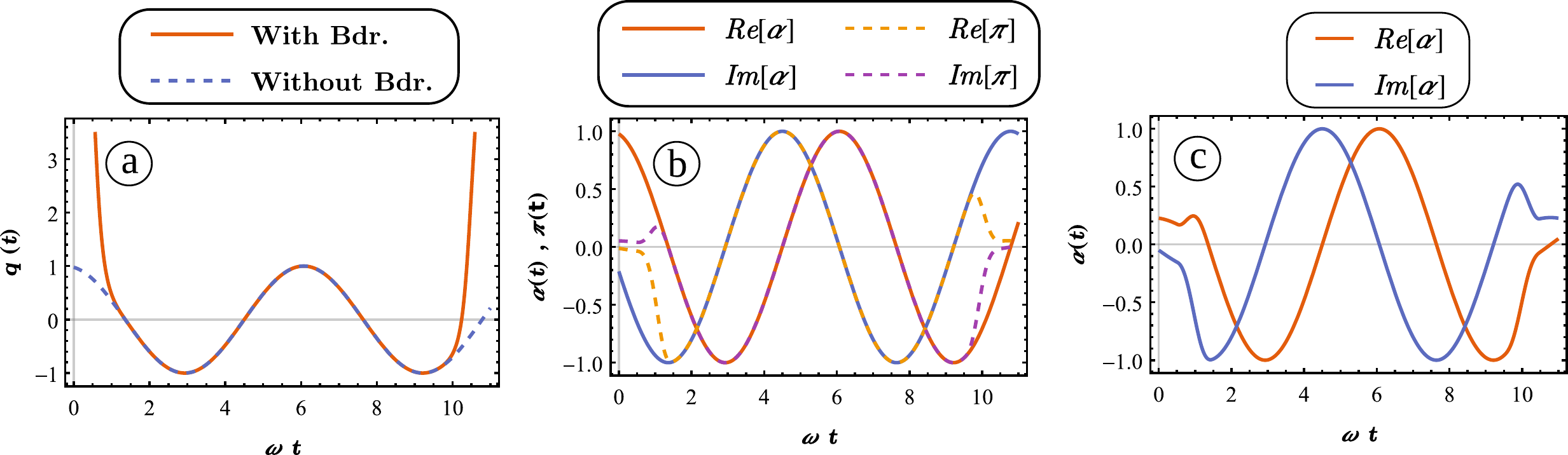}
\caption{Classical trajectory of the Harmonic Oscillator computed using different actions. a) Standard quadratic action $S_{quad}$ defined in Eq.~\eqref{eq:S_quad_ho}, b) Optimal control action $S_{OC}$ defined in Eq.~\eqref{eq:action_oc_ho} without boundary terms and c) action with holomorphic variables $S_{holo}$ defined in Eq.~\eqref{eq:action_holo_ho}. No boundary terms are taken into account in order to provide a fair comparison between the different models. The domain of integration $[t_i,t_f]$ is given by $t_i = 1/\omega$, $t_f = 10/\omega$. To integrate the dynamics, the initial condition is taken at $t=4.5/\omega$ such that $\alpha = \ii$ and $\pi = 1$. In the case of panel a), the initial condition is $q=0$ and $\tfrac{dq}{dt} = 1$. Extremals of the action are computed using regularized function. In order to keep a maximum of analytic computations the mollifier $\eta_\epsilon(x) =  \cos(\pi x/(2 \epsilon)^2/\epsilon \Id_{[-\epsilon,\epsilon]}(x) $ is used. Similar results can be obtained with a true $C^\infty$ function. Here, the value $\epsilon = 0.5$ is used.}
\label{fig:traj_ho}
\end{figure}

\begin{table}[]
\resizebox{\textwidth}{!}{%
\begin{tabular}{c|ccc|ccc|ccc|}
\cline{2-10}
 &
  \multicolumn{3}{c|}{$S_{quad}$} &
  \multicolumn{3}{c|}{$S_{OC}$} &
  \multicolumn{3}{c|}{$S_{holo}$} \\ \hline
\multicolumn{1}{|c|}{$\epsilon$} &
  \multicolumn{1}{c|}{$S$} &
  \multicolumn{1}{c|}{$\delta S$} &
  $\delta^2 S$ &
  \multicolumn{1}{c|}{$S$} &
  \multicolumn{1}{c|}{$\delta S$} &
  $\delta^2 S$ &
  \multicolumn{1}{c|}{$S$} &
  \multicolumn{1}{c|}{$\delta S$} &
  $\delta^2 S$ \\ \hline
\multicolumn{1}{|c|}{1} &
  \multicolumn{1}{c|}{3.96} &
  \multicolumn{1}{c|}{1.33 $\times 10^{-9}$} &
  2.47 &
  \multicolumn{1}{c|}{3.06 $\times 10^{-8}$} &
  \multicolumn{1}{c|}{7.61 $\times 10^{-17}$} &
  0 &
  \multicolumn{1}{c|}{-8.78} &
  \multicolumn{1}{c|}{4.43 $\times 10^{-10}$} &
  -0.75 \\ \hline
\multicolumn{1}{|c|}{0.5} &
  \multicolumn{1}{c|}{3.31} &
  \multicolumn{1}{c|}{2.99 $\times 10^{-9}$} &
  19.74 &
  \multicolumn{1}{c|}{5.33 $\times 10^{-8}$} &
  \multicolumn{1}{c|}{-2.06 $\times 10^{-16}$} &
  0 &
  \multicolumn{1}{c|}{-9.00} &
  \multicolumn{1}{c|}{2.27 $\times 10^{-10}$} &
  -1.5 \\ \hline
\multicolumn{1}{|c|}{0.1} &
  \multicolumn{1}{c|}{2.45} &
  \multicolumn{1}{c|}{1.64 $\times 10^{-8}$} &
  2467.4 &
  \multicolumn{1}{c|}{4.17 $\times 10^{-8}$} &
  \multicolumn{1}{c|}{9.44 $\times 10^{-16}$} &
  0 &
  \multicolumn{1}{c|}{-9.14} &
  \multicolumn{1}{c|}{4.29 $\times 10^{-11}$} &
  -7.5 \\ \hline
\multicolumn{1}{|c|}{0.01} &
  \multicolumn{1}{c|}{2.23} &
  \multicolumn{1}{c|}{1.57 $\times 10^{-7}$} &
  2.47 $\times 10^6$ &
  \multicolumn{1}{c|}{2.97 $\times 10^{-8}$} &
  \multicolumn{1}{c|}{-8.16 $\times 10^{-14}$} &
  0 &
  \multicolumn{1}{c|}{-9.17} &
  \multicolumn{1}{c|}{-9.15 $\times 10^{-12}$} &
  -75 \\ \hline
\end{tabular}%
}
\caption{Value of the actions $S_{quad}, S_{OC}, S_{holo}$ and their first and second order variations near the extremal trajectories of Fig.~\ref{fig:traj_ho} for different values of $\epsilon$. The variation is centered at $t=4.5/\omega$. We refer to the caption of  Fig.~\ref{fig:traj_ho} for the parameters used in the computation of the extremal. The variations are computed using the same mollifier as the one used to regularize the functions.}
\label{tab:action_oc}
\end{table}

\subsection{The Scalar Field}

In this section, we consider the case of the real scalar field~\cite{itzykson2012quantum} in flat space $\psi:\Mc M \rightarrow \setR$. This field satisfies the wave equation:
\begin{equation}
\left(\partial^\mu \partial_\mu - m^2\right) \psi(x) =0.
\end{equation}
Similarly to the harmonic oscillator (see Sec.~\ref{sec:harmonic_oscillator}), different kinds of actions can be defined for which the wave equation is recovered with a first-order variation. Without boundary terms, the equivalent of the three actions studied for the harmonic oscillator is:
\begin{align}
S_{quad} &= \frac{1}{2} \int_V d^4x( \partial^\mu \psi \partial_\mu \psi - m^2 \psi^2) \\
S_{OC}	&= \int_V d^4 x \left( \phi(x) [\partial^\mu P_\mu(x) - m^2 \psi(x)] + \pi^\mu (x) [\partial_\mu \psi (x) - P_\mu (x)]\right) \\
S_{2nd} &=   \int_V d^4x ~\Phi(x) \left(\partial^\mu \partial_\mu - m^2\right) \psi(x)
\end{align}
where $P_\mu$ is a field which becomes $P_\mu(x) = \partial_\mu \phi (x)$ at an extremal, and $\phi,\pi,\Phi$ are adjoint states. Straightforward calculations give us a series of equations very similar to the harmonic oscillator in sec.~\ref{sec:harmonic_oscillator} (explicit calculations are left for the reader).

The first-order variations of these actions suggest different definitions of boundary terms. We focus on a suitable boundary term for the optimal control action. Ultimately, we would like this term to have the same properties as the one of the harmonic oscillator, studied in Sec.~\ref{sec:harmonic_oscillator}. However, we have to overcome a few subtleties, since the boundary of $V$ is not necessarily time oriented.

Here, we propose the following construction. We introduce $\chi(x) = \psi(x) + \ii n^\mu(x) P_\mu(x)$, with $n^\mu(x)$ a normal vector of $\partial V$ at the point $x$ (the orientation plays a role in the path integral but we left this point aside for the moment). We define the boundary term as follows:
\begin{equation}
h = \ii \log \braket{\zeta}{\chi}
\label{eq:bdr_scalar_field}
\end{equation}
with
\begin{equation}
\braket{\zeta}{\chi} = \exp\left(
\int_{\partial V} d^3x \left\lbrace \frac{1}{2} |\zeta(x) - \chi(x)|^2  + \ii \Im(\zeta^*(x)\chi(x))\right\rbrace
\right).
\label{eq:scalar_product_coherent_states_for_scalar_field}
\end{equation}
We can verify that this choice of boundary cost function forces the adjoint state to take the values of the system state on $\partial V$. This is easily seen by the fact that the real part of $h$ is of the form:
\begin{equation}
\Re h = \int_{\partial V} d^3x ~ n^\mu\left(\psi(x) P_\mu'(x) - \psi'(x) P_(x) \right)
\label{eq:antisymmetric_form_scalar_field}
\end{equation}
where primed variables are referred to $\zeta$ and unprimed variables to $\chi$. The first order variation of the total action gives us $\pi^\mu n_\mu = P_\mu 'n^\mu$, $\phi = \psi'$, and $\zeta = \chi$. Hence we have $\pi^\mu n_\mu = P_\mu n^\mu$, $\phi = \psi$. Note that a similar mechanism is also present for the harmonic oscillator (see Eq.~\eqref{eq:adjoint_state_constrains_ho}), but it is not explicit since we have used complex differentiation in the equations. We shall also underline that the antisymmetric form given in Eq.~\ref{eq:antisymmetric_form_scalar_field} is at the core of the construction of Hadamard states~\cite{gerard2018introduction,khavkine2015algebraic}, which can be viewed as a generalization of the vacuum state in QFT in curved space-time.

Eq.~\eqref{eq:scalar_product_coherent_states_for_scalar_field} is equivalent to the standard scalar product of coherent states when the boundary is spacelike. To see this point, we recall that the scalar product between two (usual) coherent states $\eta_1$, $\eta_2$ of the scalar field, given in the momentum basis is~\cite{itzykson2012quantum}
\begin{equation}
\braket{\eta_1}{\eta_2} = \exp\left(
\int_{\partial V} \frac{d^3k}{(2\pi)^2} \frac{1}{2\omega(k)} \left\lbrace \frac{1}{2} |\eta_1(k) - \eta_2(k)|^2  + \Im(\eta_1^*(x)\eta_2(x))\right\rbrace
\right).
\label{eq:scalar_product_coherent_states_for_scalar_field}
\end{equation}
with $\omega(k) = \sqrt{k_a k^a + m^2}$. We can recover Eq.~\eqref{eq:scalar_product_coherent_states_for_scalar_field} directly if we define:
\begin{equation}
\eta_2(k) = \sqrt{2 \omega(k)} \int d^3x \left( \psi(x) + \ii P_0(x) \right)e^{\ii k_a x^a},
\end{equation}
and similarly for $\eta_1$, but with $\psi'$ and $P_0'$.

We finish this section with a comment on a possible boundary cost for $S_{2nd}$. It shall be of the form: $h=-\int_{\partial V}d^3x n^\mu\left( \phi(x) \partial_\mu \psi(x) + 2 \psi(x) \partial_\mu \phi(x) \right)$ where $\phi$ is another scalar field. Once again, it is not clear yet if this boundary term can be used for a quantum scalar product.

\subsection{Lorentzian Space-time}
\label{sec:lorentzian_space_time}

In this section, we investigate how an optimal control action can be defined to characterize a Lorentzian manifold. The issue has been already considered by the author in~\cite{ansel2022loop}, using the standard variables used in loop quantum gravity. Here, a similar approach is used, but using variables closer to the ADM framework~\cite {misner_gravitation_1973}, for which the physical interpretation is easier. Contrary to the previous paper, a specific focus is made on the adjoint state and the boundary term.

The manifold $\Mc M$ is described using a tetrad field $e$. For simplicity, we consider a single-coordinate chart for $\Mc M$.

with an action principle, the data of the geometry must be determined from first principles derived from the extremals of an action. This way, we must assume a minimum mathematical structure and the manifold structure must be encoded in an action. 

\begin{definition} \textbf{(tetrad field)}~\cite{rovelli_quantum_2007,rovelli_covariant_2014,hamilton2018general}
\begin{equation}
\begin{array}{cccc}
e : & \setR^4 & \mapsto     & GL_4(\setR) \\
    & x       & \rightarrow & e_\mu^I (x)
\end{array}
\end{equation}
\end{definition}
The tetrad field allows us to reconstruct the metric of the manifold through the identity:
\begin{equation}
g_{\mu\nu}(x) = e^I_\mu (x)~ e^J_\nu (x) ~\eta_{IJ},
\end{equation}
where $\eta_{IJ} = \text{diag}(-1,1,1,1)_{IJ}$ is the Minkowski metric. The tetrad indices $I,J,K,...$ are moved up and down using the Minkowski metric $\eta$ while coordinates indices $\mu,\nu,\rho..$ are moved using the metric tensor $g$. Moreover, a tensor field can be transformed into a tetrad tensor with contractions with the tetrad. For example $v^\mu e_\mu^I = v^I$, and using the identity $e^I_\mu e^\mu_J = \delta^I_J$, we have $v^I e_I^\mu = v^\mu$. These relations are easily generalized to tensors and cotensors of arbitrary type~\cite{rovelli_quantum_2007,rovelli_covariant_2014,hamilton2018general}.
%

The geometry of the space-time manifold is determined using Einstein's field equations (EFE)
\begin{equation}
G_{\mu \nu}[e](x) = \kappa~ T_{\mu \nu}(x)
\end{equation}
where $G$ is the Einstein tensor, $T$ is the stress-energy tensor and $\kappa$ is a coupling constant.
Einstein's tensor is defined by:
\begin{equation}
G_{\mu \nu}[e] = R_{\mu \nu}[e] + g_{\mu \nu}[e] \left( \Lambda - \frac{1}{2} R[e]\right)
\label{eq:EFE_v1}
\end{equation}
with $\Lambda$ the cosmological constant, $R_{\mu \nu}$ the Ricci curvature tensor, and $R$ the scalar curvature. 
In the form of Eq.~\eqref{eq:EFE_v1}, we cannot directly define an optimal control action with good boundary properties. If we choose a Lagrangian of the form $P^{\mu \nu}(G_{\mu\nu} - \kappa T_{\mu\nu})$, we have variations similar to the derivative of a Dirac distribution at the boundary, exactly like in the case of the second order action \eqref{eq:action_second_order_ho} of the harmonic oscillator. For a similar reason, it is also not clear how the boundary term can be assimilated with a well-defined scalar product in the space of the solution of Einstein's field equation. This is a point that would be interesting to clarify in the future since it may give us a simpler connection between canonical quantum gravity, and the standard formulation of general relativity. In the meantime, we avoid this difficulty and we use a 3+1 formalism where the propagation of the field is studied along a given time-like coordinate. With such an approach, the covariance is less obvious, but the connection with the vast literature of quantum gravity is easier (although it remains non-trivial).

With the 3+1D decomposition, the tetrad becomes~\cite{pons_gauge_2000}:
\begin{equation}
e_{\mu}^I = \left( \begin{array}{cc}
\mathsf{N} & 0 \\ 
e^i_a \mathsf{N}^a &  e_a^i
\end{array} \right),
\end{equation}
with $\mathsf{N}$, $\mathsf{N}^a$ the lapse function, and the shift vectors, which are gauge degrees of freedom. $e^a_i$ is called the triad, and it verifies similar group properties as the tetrad, but the group is now $GL_3(\setR)$ instead of $GL_4(\setR)$. Following the discussion of Sec.~\ref{sec:variation_group_field}, we express $e$ in terms of a generalized field defined on $\mathfrak{gl}_3(\setR)$:
\begin{equation}
e^i_a(x) = \exp\left(f(x)\right)^i_a = \exp \left( f_A(x) \tau^A\right)^i_a.
\label{eq:link_f_to_e}
\end{equation}
In the rightmost term of the equality, the generators $\tau^A$ of $\mathfrak{gl}_3(\setR)$ have been introduced.

We shall note that with this formula, any elements of the group cannot be generated by this identity. This is because $GL_3(\setR)$ is not simply connected. To see this point, consider the singular value decomposition $e = u.d.v^t$, where $t$ denotes the matrix transpose. Here, $u,v \in O(3)$ and $d$ is an arbitrary real diagonal matrix. From Eq.~\eqref{eq:link_f_to_e}, we expect that $d$ is of the form $d= \exp(d')$ with $d'$ another real diagonal matrix. It is then obvious that $d$ is positive and we cannot have a negative eigenvalue, as it can be expected for an arbitrary element of $GL_3(\setR)$. Then, with Eq.~\eqref{eq:link_f_to_e}, only a subset of $GL_3(\setR)$ closed to the identity can be generated. Other subsets of the group can be obtained by matrix multiplication with another (well-chosen) element of the group. This constitutes a change of frame induced by a change of coordinate system (we recall that the tetrad is modified according to $e^I_{\mu'} = e^I_\mu \tfrac{dx^\mu}{dx^{\mu'}}$). The change of the coordinate system is not related to the manifold geometry and it is an input of the theory, to define the variational principle.

We also introduce the extrinsic curvature~\cite{arnowitt_dynamics_1962,friedrich_cauchy_2000}:
\begin{equation}
k^i_a = \frac{e^{bi} }{2\mathsf{N}} \left( \partial_t (e_a^j e_b^k)\delta_{jk} + D_{(a}\mathsf{N}_{b)} \right),
\end{equation}
where the operator $D_a$ is the covariant derivative of the 3-metric, and $t$ is the time parameter. Note that $k^i_a$ can be zero $\forall i,a$, e.g. for flat geometries, and we do not need to impose a specific group structure.

The dynamical variables are now $(e^a_i,k^i_a)$, and the equations of motions are given by the first order differential system in $t$:

\begin{align}
 \label{eq:EOM_general_relativity_e}
 \partial_t e^a_i&=- \Omega_0^{ij} e^a_j - D_b \mathsf{N}^ae^b_i - \mathsf{N} e_i^b e_j^a k^j_b\\
 \partial_t k^i_a& = -{{\epsilon^i}_j}^k k^j_a \Omega_0^k + \mathsf{N}^b D_b k^i_a - \mathsf{N} \left({^3}R^i_a + k k^i_a - k^i_b k^b_a + \Lambda e^i_a + \kappa \left(T^i_a -\frac{T}{2}e^i_a\right) \right)
 \label{eq:EOM_general_relativity_k}
\end{align}
with $\Omega_0^{ij}= {{\epsilon^i}_j}^k k^j_a \Omega_0^k$ the coefficients $0$ of the spin-connection, which can be regarded as another gauge variable, and $T^i_a = T_{ab}e^{bi}$ is the stress-energy tensor restricted on $t=cst$ hypersurface.

The equation of motions gives us how a solution evolves, but we need additional constraints to specify if the couple $(e^a_i,k^i_a)$ is physically admissible. The equations of motion must conserve these constraints so that we can impose them only at the boundary. In the standard ADM formulation of general relativity, there are two constraints: the scalar and the vector constraint. The scalar constraint is used to derive the equations of motions \eqref{eq:EOM_general_relativity_e} and \eqref{eq:EOM_general_relativity_k}, and the second one imposes a structure on space-like hypersurface~\cite{rovelli_covariant_2014,misner_gravitation_1973,arnowitt_dynamics_1962}. We do not report them here since they do not play a role in our discussion.

We are now in a position to give the optimal control action for general relativity written for the new variables:
\begin{equation}
S_{OC} = h_{\partial V}[f,k] + \int_V d^4x ~\left( P^i_a ( \partial_t e^a_i [f] - \Upsilon^a_i[f,k] ) + \Pi^a_i(\partial_t k^i_a - \Gamma^i_a[f,k]) \right)
\label{eq:SOC_gravity}
\end{equation}
with $h_{\partial V}$ a boundary cost, and $\Upsilon, \Gamma$ are given by the rightmost term of Eq.~\eqref{eq:EOM_general_relativity_e} and \eqref{eq:EOM_general_relativity_k}. Note that under a diffeomorphism transformation, the adjoint states $P$ and $\Pi$ are not changed like a tetrad field, but they absorb a term $\text{det}(e^I_\mu)$ in order to maintain $S_{OC}$ gauge invariant. The physical interpretation of the O.C. adjoin state of the gravitational field is discussed in Sec.~\ref{sec:physical interpretation}.

\begin{proposition} \textbf{(First order variation of the gravitational OC action with respect to the gravitational adjoint states)}
\[
\frac{\delta S_{OC}}{\delta P^i_a(x)} = \Id_V(x) \left( \partial_t e^a_i [f] - \Upsilon^a_i[f,k] \right) + O(\epsilon^{q+1}).
\]
\[
\frac{\delta S_{OC}}{\delta \Pi^a_i(x)} = \Id_V(x) \left( \partial_t k^i_a - \Gamma^i_a[f,k] \right) + O(\epsilon^{q+1}).
\]
\label{eq:first_order_variation_SOC_with_grav_adj_state}
\end{proposition}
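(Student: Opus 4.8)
The plan is to recognize that this proposition is the gravitational instance of the general first-order variation formula already established for the optimal control action, in particular of Eq.~\eqref{eq:dS/dp_oc_action}: varying the action with respect to an adjoint state that enters only algebraically and linearly returns the corresponding equation of motion, localized by the regularized indicator function of the integration domain. First I would observe that the adjoint states $P^i_a$ and $\Pi^a_i$ appear nowhere in the boundary cost $h_{\partial V}[f,k]$, which depends only on $f$ and $k$, and that inside the bulk integral each of them occurs undifferentiated and to first power. Consequently the structural feature that produces distributional boundary terms in the general theorem on first-order variations --- the derivative $\partial_\mu^{(k)} \tilde\Id_V$ acting on the conjugate factors --- is simply absent here, and the computation collapses to the $n=0$ case of that theorem.

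Concretely, I would apply Definition~\ref{def:first_order_variation} with the field variation $\delta P^i_a(x) = \lambda\,\eta_\epsilon(x-y)$, first regularizing the domain indicator as $I = \int_{\Mc M} \tilde\Id_V(x)\,F(x)\,d^4x$ with $F = P^i_a(\partial_t e^a_i[f] - \Upsilon^a_i[f,k]) + \Pi^a_i(\partial_t k^i_a - \Gamma^i_a[f,k])$. Differentiating in $\lambda$ and setting $\lambda=0$, the $\Pi$-block and $h_{\partial V}$ drop out because they are $P$-independent, leaving
\begin{equation*}
\frac{\delta S_{OC}}{\delta P^i_a(y)} = \int_{\Mc M} d^4x~ \tilde\Id_V(x)\, \bigl(\partial_t e^a_i[f](x) - \Upsilon^a_i[f,k](x)\bigr)\, \eta_\epsilon(x-y).
\end{equation*}
I would then invoke the filtering identity for mollifiers established in the proof of the first-order-variation theorem, namely $\int \tilde\Id_V(x)\, g(x)\, \eta_\epsilon(x-y)\, d^4x = \tilde\Id_V(y)\, g(y) + O(\epsilon^{q+1})$ with $g = \partial_t e^a_i[f] - \Upsilon^a_i[f,k]$, to obtain exactly the claimed expression. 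The computation for $\Pi^a_i$ is identical, with the roles of the two integrand blocks exchanged.

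The only genuine care required concerns the generalized-function bookkeeping rather than any hard estimate, so there is no real obstacle. Because $e^a_i = \exp(f)^a_i$ is a nonlinear function of the generalized field $f$, both $\partial_t e^a_i[f]$ and $\Upsilon^a_i[f,k]$ are themselves elements of $\Mc G$, so each product in $F$ must be read in the Colombeau algebra; this is precisely where the nonlinear framework is needed and where a naive distributional treatment would fail. I would note that $g$ is moderate and of the appropriate support, so that the filtering identity applies verbatim and the discarded remainder genuinely lies in $\Mc N$ modulo the order-$\epsilon^{q+1}$ term. The single subtlety worth flagging is the boundary: since $P^i_a$ is undifferentiated, no $\partial_\mu\tilde\Id_V$ term is generated, so the result carries only the factor $\tilde\Id_V(y)$ --- equal to $1$ in the interior, to the relevant geometric fraction (e.g. $1/2$ on a smooth face) on $\partial V$, and vanishing outside --- rather than a singular generalized function supported on $\partial V$. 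This is exactly what makes the variation with respect to the adjoint states clean, in contrast to the variations with respect to $f$ or $k$, which inherit the boundary distributions.
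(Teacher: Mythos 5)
Your proposal is correct and takes essentially the same route as the paper: the proposition is just the gravitational specialization of the generic adjoint-state variation Eq.~\eqref{eq:dS/dp_oc_action}, obtained by applying Definition~\ref{def:first_order_variation} to an action in which $P^i_a$ and $\Pi^a_i$ enter linearly, undifferentiated, and nowhere in $h_{\partial V}[f,k]$, so that only the mollifier filtering identity $\int \tilde\Id_V(x)\,g(x)\,\eta_\epsilon(x-y)\,d^4x = \tilde\Id_V(y)\,g(y) + O(\epsilon^{q+1})$ is needed and no $\partial_\mu\tilde\Id_V$ boundary terms arise. Your remarks on the Colombeau bookkeeping for the nonlinear dependence $e^a_i=\exp(f)^a_i$ and on the boundary value of $\tilde\Id_V$ are consistent with the paper's treatment and flag exactly the right (mild) subtleties.
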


The variation of $S_{OC}$ with respect to the other variables is far less trivial, specifically with respect to $f$. The complexity in the calculations is of two orders: we have many second-order partial derivatives (since the curvature contains second-order variations of the metric tensor), and the calculation of variations of $e$ gives complicated formula since in general, we have:

\begin{proposition} \textbf{(First order variation of the gravitational OC action with respect to $f_A$ and $k^j_b$)} Without boundary cost, the first order variation of $S$ with respect to $f_A$ and $k^j_b$ is 
\begin{equation}
\begin{split}
\frac{\delta S_{OC}}{\delta f_A(x)} =&  - \left( -P^i_a (x) \frac{\delta \Upsilon_i^a}{\delta f_A(x)} + \Pi_i^a (x) \frac{\delta \Gamma^i_a}{\delta f_A(x)} + \partial_t P^i_a (x) C^{Aa}_i(x)\right) \Id_V(x) \\
&-  P^i_a (x) C^{Aa}_i(x) \frac{\partial \Id_V}{\partial t }(x) \\
\frac{\delta S_{OC}}{\delta k^j_b(x)} =&  - \left( -P^i_a (x) \frac{\delta \Upsilon_i^a}{\delta k^j_b(x)} + \Pi_i^a (x) \frac{\delta \Gamma^i_a}{\delta k^j_b(x)} + \partial_t \Pi^b_j (x) C^{Aa}_i(x)\right) \Id_V(x) \\
&-  \Pi^b_j (x)  \frac{\partial \Id_V}{\partial t }(x) 
\end{split}
\label{eq:first_order_variation_SOC_with_tetrad_2}
\end{equation}
\label{prop:first_order_variation_SOC_with_tetrad}
\end{proposition}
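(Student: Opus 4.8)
The plan is to apply the Gateaux derivative of Definition~\ref{def:first_order_variation} directly to the action~\eqref{eq:SOC_gravity} (dropping $h_{\partial V}$), treating the variation of $f_A$ through the matrix exponential with the Lemma on the first order variations of $e^f$, and the variation of $k^j_b$ as an ordinary field variation. As in the proof of the first-order variation theorem, I would first regularize the domain indicator and write $S_{OC} = \int_{\Mc M} \tilde\Id_V(x)\big(P^i_a(\partial_t e^a_i - \Upsilon^a_i) + \Pi^a_i(\partial_t k^i_a - \Gamma^i_a)\big)\,d^4x$, so that every integrand is a well-defined generalized function and the boundary of $V$ is carried by $\tilde\Id_V$.

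For $\tfrac{\delta S_{OC}}{\delta f_A(y)}$ I would set $\delta f_A(x) = \lambda\,\eta_\epsilon(x-y)$ and differentiate at $\lambda=0$. The kinetic term $P^i_a\,\partial_t e^a_i$ is the clean part: by the second identity of the Lemma its $\lambda$-derivative is $\partial_t\big(\eta_\epsilon(x-y)\,C^{Aa}_i(x)\big)$, where $C^{Aa}_i$ collects the factor $e^f$ and the operator $C^A=\sum_n \tfrac{(-1)^n}{(n+1)!}\text{ad}_f^n(\tau^A)$. Moving the $\partial_t$ off the mollifier by the same distributional integration by parts used earlier, the Leibniz rule splits the result into a bulk piece proportional to $\tilde\Id_V\,\partial_t P^i_a\,C^{Aa}_i$ and a boundary piece proportional to $(\partial_t\tilde\Id_V)\,P^i_a\,C^{Aa}_i$; the sifting property $\int \tilde\Id_V(x)\,g(x)\,\eta_\epsilon(x-y)\,d^4x = \tilde\Id_V(y)\,g(y) + O(\epsilon^{q+1})$ then localizes both at $y$. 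The remaining terms $-P^i_a\Upsilon^a_i - \Pi^a_i\Gamma^i_a$ are handled by the chain rule, producing $\tfrac{\delta\Upsilon^a_i}{\delta f_A}$ and $\tfrac{\delta\Gamma^i_a}{\delta f_A}$ against $\tilde\Id_V$; collecting the three contributions yields the stated expression (the relative signs following the conventions fixed for $\Upsilon$ and $\Gamma$) up to the $O(\epsilon^{q+1})$ rest.

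The computation of $\tfrac{\delta S_{OC}}{\delta k^j_b(y)}$ follows the same template but is simpler on the kinetic side: since $k^i_a$ is an ordinary, non-exponentiated field, varying it in $\Pi^a_i\,\partial_t k^i_a$ produces $\partial_t\eta_\epsilon(x-y)$ directly, and integration by parts gives the bulk term $\tilde\Id_V\,\partial_t\Pi^b_j$ together with the boundary term $(\partial_t\tilde\Id_V)\,\Pi^b_j$ without any $C^{Aa}_i$ factor. The $\Upsilon,\Gamma$ pieces again enter through $\tfrac{\delta\Upsilon^a_i}{\delta k^j_b}$ and $\tfrac{\delta\Gamma^i_a}{\delta k^j_b}$, and assembling everything reproduces the second identity.

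The main obstacle is not the structural bookkeeping above but the functional derivatives $\tfrac{\delta\Upsilon^a_i}{\delta f_A}$ and $\tfrac{\delta\Gamma^i_a}{\delta f_A}$ left abstract in the statement. Because $\Upsilon$ and $\Gamma$ contain the intrinsic curvature ${}^3R^i_a$ and the covariant derivatives $D_a$, they depend on $f$ through $e[f]$ and up to its second spatial derivatives; each such derivative must be transferred off the mollifier by repeated integration by parts, using the generalized Leibniz rule together with the Lemma to differentiate the exponential factors. One must verify that these spatial integrations by parts feed only the bulk (their own boundary pieces being absorbed into $\tfrac{\delta\Upsilon}{\delta f_A}$ or vanishing on the relevant part of $\partial V$), so that the sole surviving boundary contribution is the $\partial_t\tilde\Id_V$ term coming from the kinetic $\partial_t$ derivatives, consistent with propagation along $t$ in the 3+1 decomposition. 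Carrying this out in full is lengthy — which is precisely why the proposition leaves $\tfrac{\delta\Upsilon}{\delta f_A}$ and $\tfrac{\delta\Gamma}{\delta f_A}$ in closed symbolic form — but it requires no idea beyond the mollifier calculus already established.
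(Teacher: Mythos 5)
Your proposal is correct and is essentially the derivation the paper leaves implicit: the proposition is stated there without a written proof, and the intended route is exactly yours --- regularize the indicator to $\tilde\Id_V$, apply the Gateaux derivative of Definition~\ref{def:first_order_variation} with the exponential-variation lemma supplying the $\eta_\epsilon\, e^f C^A$ factor in the $f_A$ direction, transfer $\partial_t$ off the mollifier by parts so that the Leibniz rule on $\tilde\Id_V P^i_a C^{Aa}_i$ produces the bulk term $\tilde\Id_V\,\partial_t P^i_a\,C^{Aa}_i$ and the boundary term $(\partial_t\tilde\Id_V)\,P^i_a\,C^{Aa}_i$, and sift with $\eta_\epsilon$ up to $O(\epsilon^{q+1})$. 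Two of your side remarks are also on target: the $k^j_b$ variation indeed carries no $C^{Aa}_i$ factor (the one appearing in the paper's second displayed equation is a typo, as its indices $A,a,i$ are not even contracted with anything), and your caution that the spatial derivatives hidden in ${}^3R^i_a$ and $D_a$ generate, in principle, $\partial_a\tilde\Id_V$ boundary contributions that the stated formula absorbs into the abstract $\tfrac{\delta\Upsilon}{\delta f_A}$, $\tfrac{\delta\Gamma}{\delta f_A}$ correctly identifies the only point the proposition glosses over.
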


With this proposition, we see that we have strong constraints on the adjoint state and the boundary cost if we want to obtain $\delta S = 0$ with respect to all variables of the theory. Similarly to the harmonic oscillator, the value of the adjoint states is imposed at the boundary with variations of the boundary cost. However, the equation of motion inside $V$ is not equivalent to Eq.~\eqref{eq:first_order_variation_SOC_with_grav_adj_state}. This is because the equation of motions for $P$ and $\Pi$ are linearized versions of the equation of motion of $e$ and $k$. Without the explicit calculation of the entire system of equations, we can underline the effect of the linearization with a single non-linear therm of Eq.~\eqref{eq:EOM_general_relativity_k}. Let us examine what happens with $\mathsf{N}  k k^i_a$. The computation of $\tfrac{\delta \mathsf{N}  k k^i_a}{\delta k^j_b(x)}$ gives us $\tfrac{\delta \mathsf{N}  k k^i_a}{\delta k^j_b(x)} = \mathsf{N}(k^i_a e^b_j + k \delta^i_j \delta^b_a)$. Next, let us assume that in a part of the boundary, the terminal cost leads to the condition $\Pi^a_i \propto e^a_i$ or $\Pi^a_i \propto k^a_i$, then term $\Pi^a_i \tfrac{\delta \mathsf{N}  k k^i_a}{\delta k^j_b(x)}$ entering in the computation of Eq.~\eqref{eq:first_order_variation_SOC_with_tetrad_2} gives us $ \propto 2k e^b_j$ or $\propto k k^a_i + k^j_a k^a_i e^b_j$. None of these two terms appears in the equation of motion of $e$ and $k$ in \eqref{eq:EOM_general_relativity_e} and \eqref{eq:EOM_general_relativity_k}. As a consequence, $\Pi(t+dt)^a_i$ cannot be, in general, proportional to $k^a_i$ or $e^a_i$ in the entire volume $V$. 

A similar result can be derived if we consider the action of the form $\mathfrak{\int}_V d^4x P^{\mu\nu}(G_{\mu\nu}(x) - \kappa T_{\mu\nu}(x))$, where $P^{ \mu \nu}$ is the adjoint state to EFE, a 4 by 4 matrix. The equation of motion of $P_{\mu\nu}$ cannot be reduced to EFE due to the linearization, and thus, we cannot impose, in general, $P^{\mu\nu} \propto g^{\mu\nu}$ for all points in $V$.

\section{Optimal Control Path Integrals With Generalized Functions}
\label{sec:path_integral_oc}

In the previous sections, we studied the calculus of variation of physical systems with generalized functions and optimal-control type actions. We have underlined the nice regularity properties of optimal control actions and we have emphasized that a quantum scalar product allows us to fix the value of the adjoint state on the boundary of the integration volume. In this section, we go one step further to the quantum theory and we develop a path integral for the optimal control action.

\subsection{General Considerations}

In this section, we define a path integral using generalized functions. More precisely, we give a well behave description of
\[
\int \Mc D[\phi] e^{ \ii S [\phi]}.
\]

The core of the approach is the same as in the usual approach of the path integral~\cite{zinn2004path}, the fields must be discretized into a finite number of degrees of freedom, and next, the limit when this number of degrees of freedom becomes infinite can be taken. The main difference is that we work with generalized functions which are based on smooth functions. Thus, we keep the regularity of the fields, as required in the calculation of most of the actions.

The discretization of the field can be performed in different ways. The simplest one is given by the theorem of approximation \ref{thm:approx}, but we have endless ways to discretize a continuous function. We can use standard tools of Hilbert spaces, with a decomposition of function into a given basis, use wavelets~\cite{shi1999generalized} or splines~\cite{ferguson1964multivariable}. Once a method is chosen, we can map the function into $\Mc E_M$ using the convolution product with a mollifier. The choice of a discretization procedure as well as the choice of the mollifier may depend on the problem considered.

In the following, focus on a system described with an OC action~\ref{eq:action_oc}. The standard case when the Lagrangian is quadratic is considered in appendix~\ref{sec:appendix_path_integral_ho}.

The OC action is defined on the extended space of configurations, and thus, in the path integral, we have to integrate over $p$ and $q$, which are assumed to be quantum fields. The control $u$ is treated as a classical field, but we can easily promote it to a quantum field. The time axis is discretized into $N-1$ intervals $[t_n,t_{n+1}]$ such that $t_1 = t_i$ and $t_f = t_N$. For simplicity, we choose $\Delta t = t_{n+1} - t_n$ constant, but this is not strictly necessary.

For simplicity, we choose the following discretization scheme for the fields::
\begin{align}
\label{eq:discretized_p_oc}
p_\mu (t)& = \sum_{n=1}^{N-1} p_{n,\mu} \Delta t ~ \eta_\epsilon(t-t_n), \\
\label{eq:discretized_q_oc}
q^\mu(t) & = \int d\tau ~ \eta_{\epsilon} (t - \tau)\sum_{n=1}^{N} q_n^\mu \underbrace{\left( \Id_n (\tau) \left[ 1 - \frac{\tau-t_n}{\Delta t}\right]+ \Id_{n-1} (\tau) \frac{ \tau - t_{n-1}}{\Delta t} \right)}_{\Delta_n (\tau)} \\
& = \int d\tau ~ \eta_{\epsilon} (t - \tau)\sum_{n=1}^{N} q_n^\mu ~\Delta_n(\tau) \\
& = \sum_{n=1}^{N} q_n^\mu ~\tilde \Delta_n(t),\\
\label{eq:discretized_u_oc}
u^i (t)& = \sum_{n=1}^N u_{n}^i \Delta t ~ \eta_\epsilon(t-t_n) ,
\end{align}
where the notation $\Id_n(t) \equiv \Id_{[t_{n},t_{n+1}[}(t)$ is used. Note that $p_\mu$ and $u^i$ are discretized with piecewise constant functions while $q^\mu$ is discretized with a linear spline. This choice is made for simplicity of presentation, but $p_\mu$ and $u^i$ can be also discretized with a linear spline, at the price of unnecessary subtleties for the following presentation.

We define the path integral as follows:
\begin{definition}\textbf{Optimal control path integral}
Let $S$ be the action of the optimal control system Eq.~\eqref{eq:action_oc}, defined on the configuration space $\Mc C = \Gamma \times \Gamma \times U = \{ (q(t),p(t),u(t))~|~q(t) \in \Gamma, p(t) \in \Gamma, u(t) \in U \}$. Let the discretization scheme given by Eq.~\eqref{eq:discretized_p_oc} to Eq.~\eqref{eq:discretized_u_oc}. Let $\psi(q_1,p_1)$ be a function of the system state and its adjoint state at the initial time. The propagator of $\psi$ as a function of time is given by:
\[
\hat W(t_1,t_N)\psi(q_1,p_1) = \frac{1}{(2 \pi)^{n(N-1)}}\int_{\Gamma^{2(N-1)}} \prod_{n,\mu,\nu} dq_n^\mu d p_{n,\nu} ~ e^{i S(\{q_n^\mu,p_{n,\nu},u_n^i\})} \psi(q_1^\mu,p_{1,\nu}). 
\]
\label{def:oc_path_integral}
\end{definition}
In order to show that $\hat W$ is indeed a propagator of the system, we shall perform a few calculations. First, we have to derive an explicit formula for $S(\{q_n^\mu,p_{n,\nu},u_n^i\})$. Using the discretization schemes Eq.~\eqref{eq:discretized_p_oc} to Eq.~\eqref{eq:discretized_u_oc}, and inserting into Eq.~\eqref{eq:action_oc}, we obtain:
\begin{equation}
\begin{split}
S =& h_f \left( \sum_{n=1}^N q_n^\mu ~\tilde \Delta_n(t_N) \right) - h_i \left( \sum_{n=1}^N q_n^\mu ~\tilde \Delta_n(t_1) \right) \\
 & + \sum_{n=1}^{N-1}  p_{n,\mu} \left( q_{n+1}^\mu - q_n  - \Delta t~ f^\mu \left( q_n \right) \right) + O(\epsilon^{q+1})\\
 & + \underbrace{\int dt f^0 (q(t),u(t),t)}_{S^0}
\end{split}
\label{eq:new_form_action_oc}
\end{equation}
The last term is left unevaluated since there is no simple expression if the explicit form of $f^0$ is not fixed.

 Next, we can insert \eqref{eq:new_form_action_oc} into the definition of the path integral, and we note that the integration over $p_{n,\mu}$ gives us an inverse Fourier transform which gives a Dirac distribution, except with $p_{1,\nu}$, due to the presence of $\psi(q_1^\mu,p_{1,\nu})$ in the integral. Introducing $\Psi$ the Fourier transform of $\psi$ with respect to its second argument, we obtain:
\begin{equation}
\begin{split}
\hat W(t_1,t_N)\psi(q_1,p_1) =& \int_{\Gamma^{N-1}} \prod_{\mu} dq_1^\mu \prod_{n=2}^{N-1} dq_n^\mu ~\delta\left(  q_{n+1}^\mu - q_n^\mu  - \Delta t~ f^\mu \left(q_n \right) + O(\epsilon^{q+1})\right) \\
& \times e^{i (h_f - h_i + S^0)} \Psi\left(q_1^\mu,q_{2}^\mu - q_1^\mu  - \Delta t~ f^\mu \left(q_1  \right) + O(\epsilon^{q+1})\right). 
\end{split}
\label{eq:new_form_path_integral_oc}
\end{equation}
To simplify this equation, we can introduce $W $ the flow associated with the discretized system, such that $q^\mu_{n+1}= W^{\mu} (q_n)= q_n^\mu + \Delta t f^\mu (q_n)$, the Eq.~\eqref{eq:new_form_path_integral_oc} becomes:
\begin{equation}
\begin{split}
\hat W(t_1,t_N)\psi(q_1,p_1) =& \int_{\Gamma^{N-1}} \prod_{\mu} dq_1^\mu \prod_{n=2}^{N-1} dq_n^\mu ~\delta\left(  q_{n+1}^\mu - W^\mu(q_n) + O(\epsilon^{q+1})\right) \\
& \times e^{i (h_f + h_i + S^0)} \Psi\left(q_1^\mu,q_{2}^\mu - W^\mu(q_1^\mu)+ O(\epsilon^{q+1})\right). 
\end{split}
\label{eq:new_form_path_integral_oc_2}
\end{equation}
Next, we can integrate the Dirac distributions. For that purpose, consider the general case:
\begin{equation}
\int dx ~ \delta(F(x) + \mathfrak{n}(x)) ~ \phi(x)
\label{eq:integral_dirac_distribution}
\end{equation}
where $F$ is moderate and $\mathfrak{n}$ is negligible, of order $\epsilon^{q+1}$, $\phi$ is an arbitrary generalized function. We introduce $G(x) = F(x) + \mathfrak{n}(x)$, and we assume that $G$ is analytic and has a countable number of zero. Let $I=\{x_0 ~|~G(x_0) = 0\}$. From Eq.~\eqref{eq:integral_dirac_distribution} we deduce:
\begin{equation}
\int dx~ \delta(G(x)) \phi(x) = \sum_{x_0 \in I} \int_{\Mc V (x_0)} dy~ \delta \left( G(x_0) + \sum_{n=1}^\infty \frac{1}{n!} \left.\frac{\partial^n G}{\partial x^n}\right\vert_{x=x_0} y^n \right) \phi(x_0+y)
\end{equation}
where the integration is over neighborhoods $\Mc V$ of $x_0$, and the restriction of $G$ on $\Mc V (x_0)$ is a bijective mapping. After a last change of variable, we obtain:
\begin{equation}
\int dx~ \delta(G(x)) \phi(x) = \sum_{x_0 \in I} \left( \left.\frac{\partial G}{\partial x}\right\vert_{x=x_0}\right)^{-1} \phi(x_0).
\end{equation}
Here, the Dirac distribution must be taken exactly, it is not regularized with a mollifier, hence $x_0$ depends on both $F$ and $\mathfrak{n}$. With this result, we see that the discretization scheme can have an impact on the path integral. We must have $\tfrac{\partial G}{\partial x}\vert_{x=x_0}\neq 0$, to avoid divergence, and we must have the existence of at least one $x_0$ to obtain a continuous limit, otherwise, we may have $\delta(G(x)) = 0$ even for $\epsilon \rightarrow 0$. Then, we must have $F(x_0)= o (\epsilon^{q+1})$. Now, let us assume that $\lim_{\epsilon \rightarrow 0} x_0(\epsilon) = X_0$, and as a consequence, $\lim_{\epsilon \rightarrow 0} G(x_0(\epsilon)) = F(X_0) = 0$. A straightforward calculation shows us that $|x_0(\epsilon) - X_0|$ is of order $\epsilon^{q+1}$, and hence,
\begin{equation}
\int dx~ \delta(G(x)) \phi(x) = \sum_{x_0 \in I} \left( \left.\frac{\partial F}{\partial x}\right\vert_{x=X_0}\right)^{-1}  \phi(X_0) + O(\epsilon^{q+1}).
\end{equation}
The discretization scheme Eq.~\eqref{eq:discretized_q_oc} have these suitable properties, which is not necessarily the case of a discretization scheme similar to Eq.~\eqref{eq:discretized_p_oc} where $x_0$ may not exist.

Returning to our initial problem, we can integrate the Dirac distributions of Eq.~\eqref{eq:new_form_path_integral_oc_2}. Assuming that there exists a unique $q_{n+1}$ such that $q_{n+1} = W(q_n)~\forall n$, we have $\tfrac {\partial}{\partial q_{n+1}^\mu}  (q_{n+1}^\mu - q_n - \Delta t f^\mu(q_n)) = 1$ and thus  $\int dq_{n+1}^\mu \delta(q_{n+1}^\mu - W^\mu(q_n) + O(\epsilon^{q+1})) \psi(q_{n+1}^\mu, q_n) = \psi(W^\mu(q_n),q_n) + O(\epsilon^{q+1}) $. There is no general simplification of Eq.~\eqref{eq:new_form_path_integral_oc_2}, but a simpler formula can be derived by computing:
\begin{equation}
\braket{\psi'(q_N,p_N)}{\hat W(t_1,T_N) \psi(q_1,p_1)} = \int_{\Gamma^2}  \prod_{\mu \nu} dq_N^\mu dp_{N,\nu} \psi'^\dagger(q_N,p_N) \hat W(t_1,T_N) \psi(q_1,p_1).
\end{equation} 
A straightforward calculation using the flow $W$ gives us:
\begin{equation}
\begin{split}
\braket{\psi'(q_N,p_N)}{\hat W(t_1,T_N) \psi(q_1,p_1)} = &\int_{\Gamma^3} \prod_{\mu,\nu} dq_1^\mu dq_2^\mu dp_{N,\nu} ~\Psi\left(q_1^\mu,q_{2}^\mu - W_1^\mu \left(q_1  \right) + O(\epsilon^{q+1})\right) \\
& \times e^{\ii~(h_f(q_1,q_2) + h_i(q_1) + S^0(q_1,q_2)} ~\psi'^\dagger(W^\mu_{N-2}(q_2),p_{N,\nu})+ O(\epsilon^{q+1})
\end{split}
\end{equation}
where $W_{N-2}^\mu$ is the flow ittreatred $N-2$ times. 
The specific case when $\psi$ and $\psi'$ are not functions of $p$ is particularly interesting since the Fourier transform with $p_{1,\nu}$ gives us a Dirac distribution, and we have $\Psi(q_1^\mu) = \psi(q_1^\mu) \delta (q_2^\mu - W^\mu(q_1) + O(\epsilon^{q+1})$. The final result is therefore simple:
\begin{equation}
\braket{\psi'(q_N)}{\hat W(t_1,T_N) \psi(q_1)} = \int_\Gamma \prod_\mu dq_1^\mu \left[ \psi(q_1^\mu) ~\psi'^\dagger(\phi^\mu_{N-1}(q_1)) ~e^{\ii~(h_f(q_1)+h_i(q_1) + S^0(q_1)}\right]  + O(\epsilon^{q+1})
\end{equation}
This last equation can be interpreted as the scalar product between two states, where the variables have been propagated using the classical equation of motion. This is a property expected with coherent states~\cite{zinn2004path,gardiner2004quantum}. The exponential $e^{\ii~(h_f(q_1)+h_i(q_1) + S^0(q_1)}$ plays the role of a density in the scalar product. 

The role taken by the adjoin state in the quantum theory is discussed in Sec.~\ref{sec:physical interpretation}.

\subsection{Generalization to a Field Over Space-time}
\label{sec:Generalization with a field over space-time}

We investigate in this paragraph how the path integral is modified when the action \eqref{eq:action_oc} is defined by a tensor field $T^\mu$ over space-time. We focus on the term with the adjoint state, which can be written as:
\begin{equation}
S = \int_{\Mc M} d^4x~ p_{\mu}^\nu (x)\left[ \partial_\nu T^\mu(x) - f_\nu ^\mu(T(x))\right]
\label{eq:action_OC_fiel_over_space-time}
\end{equation}
Note that the covariant derivative is not explicitly introduced here, but its effect can be hidden into $f_\nu^\mu$. From this equation, it is clear that the configuration space is $\Gamma \times \Gamma ^2$, and in the path integral, we have to integrate over $T^\mu$ and $p_\mu^\nu$. Using a similar discretization scheme as before, we find easily that at a given space-time location $x$, the contribution to the path integral is of the form $\prod_{\mu \nu} \delta \left(T^\mu(x + \Delta x_\nu) - T^\mu(x ) + \Delta x_\nu f_\nu^\mu (T(x)) \right)$. We keep exactly the same properties with a field over $\setR$, the only difference being that the boundary state $\psi$ is defined on the boundary of a closed space-time region which is suitable for the integration of the equation of dynamics (i.e. we need a well-posed Cauchy problem).

\subsection{Action Under a Diffeomorphism}

We consider an optimal control action over space-time in the form of Eq.~\eqref{eq:action_OC_fiel_over_space-time}. We recall that $x$ is a coordinate system of $\Mc M$, and $p$, $q$, $f$ are generalized fields. Then, under the action of a diffeomorphism, they must transform using the diffeomorphism operator introduced in Definition~\ref{def:diffeomorphism_operator}. However, there are two small subtilities since we impose that $S$ is invariant by a change of coordinate system. This leads to the fact that $p$ is not a generalized tensor field, but a generalized tensor density and $f$ is a tensor field that is a function of another tensor field. Hence, the transformation rules are the following:
\begin{equation}
p^{\nu'}_{\mu'} (X)  = \frac{1}{|\det (\partial f)|(X)} \left[\hat D (x,X) p\right]^{\nu'}_{\mu'} + O(\epsilon^{q+1}) 
\end{equation}
\begin{equation}
\begin{split}
f^{\mu'}_{\nu'} (T(X)) = & \left[\hat D (x,X) {f_{(0)}}\right]^{\mu'}_{\nu'} +  \left[\hat D (x,X) {f_{(1)}}\right]^{\mu'}_{\nu' \sigma'}  \left[\hat D (x,X) T\right]^{\sigma'}   \\
& +  \left[\hat D (x,X) {f_{(1)}}\right]^{\mu'}_{\nu' \sigma' \rho'}   \left[\hat D (x,X) T\right]^{\sigma'} \left[ \hat D (x,X) T\right]^{\rho'}  + ... + O(\epsilon^{q+1}).
\end{split}
\end{equation}
Using these transformations, we have:
\begin{equation*}
\int_{\Mc M} d^4x~ p_{\mu}^\nu (x) \left[ \partial_\nu T^\mu(x) - f_\nu ^\mu(T(x))\right] = \int_{\Mc M} d^4X~ p_{\mu'}^{\nu'} (X) \left[ \partial_\nu T^{\mu'}(X) - f_{\nu'} ^{\mu'}(T(X))\right] + O(\epsilon^{q+1}).
\end{equation*}

\subsection{Application to Physical Systems}

In this section, we explore briefly, the application of the optimal control path integral for the physical systems studied in Sec.~\ref{sec:anlysis_action_physical_syst}. In all cases, we use $S_{OC}$ as a starting point, and we consider the case where the quantum state does not depend on the adjoint state.

Since all these actions are of the form $S_{OC} = h(q) + \int dt p(\dot{q} - f(q))$, the implementation of the theory is quite straightforward. 

\subsubsection{Optimal Control path integral for the Harmonic Oscillator}
\label{sec:opt_path_integral_ho}

We consider the action \eqref{eq:action_oc_ho}, with a terminal cost only at the final time (this is only a matter of simplicity in the equations). The field $\alpha$ is discretized with a linear spline, and the adjoint state with a piecewise constant function. The path integral is 

\begin{equation}
\begin{split}
\hat W(t_1,t_N) \psi(\alpha_1) &= \frac{1}{(2 \pi)^{2(N-1)}}\int_{\setC^{4(N-1)}} \prod_{n} d\alpha_n d\bar \alpha_n d \pi_n d\bar \pi_n ~ e^{i S_{OC}(\{\alpha_n,^ip_{n},u_n^i\})} \psi(\alpha_1)\\
& = \int d\alpha_1 d\bar \alpha_1~ \psi(\alpha_1) \braket{\beta_f}{W^{t_N-t_1}(\alpha_1)} .
\end{split}
\end{equation}
Where $\beta_f$ comes from the definition of the boundary cost function.
Using $\psi(\alpha_1) = \delta(\alpha_1 - \beta_i)$ gives us:
\begin{equation}
\hat W(t_1,t_N) \psi(\alpha_1) = \braket{\beta_f}{W^{t_N-t_1}(\beta_i)} ,
\end{equation}
which is the scalar product between two coherent states.

\subsubsection{Optimal Control Path Integral for the Scalar Field}

Similarly to the harmonic oscillator, we provide a path integral for the scalar field based on the optimal control action. The extended configuration space being $\Mc C = \setR^{2(4+1)}$ (one variable for the scalar field, four for its derivatives, and the same numbers for the adjoint state), we have to integrate over this space at each space-time position. The boundary term in the action is given by Eq.~\eqref{eq:bdr_scalar_field}. Contrary to the harmonic oscillator, we have space and time coordinates, thus the $N$ discretization points are not only on the time axis but on the entire volume of definition. The boundary of $V$  is cut into two parts, noted $\partial V_1$ and $\partial V_2$, for \textit{in} and \textit{out} states. In these boundaries, there are respectively $M$ and $M'$ points. The discretization scheme is similar to the harmonic oscillator, piecewise constant for the adjoint state, and linear interpolation for the field variables. Explicitly, we have:
\begin{equation}
\begin{split}
\hat W_{\partial V_1 \mapsto \partial V_2} \Psi(\{\psi_m,P_{\mu,m}\}_{m \in \partial V_1}) &= \frac{1}{(2 \pi)^{5(N-M')}}\int_{\setR^{10(N-M')}} \prod_{n,\mu,\nu} d\psi_n d\phi_n d P_{\mu,n} d \pi_n^\nu  \\
& ~ ~ ~ \times e^{i S_{OC}(\{\psi_n,\phi_n,P_{\mu,n},\pi_n^{\nu}\})} \Psi(\{\psi_m,P_{\mu,m}\}_{m \in \partial V_1})
\end{split}
\end{equation}
Note that $\partial V_1$, $\partial V_2$, and their associated points cannot be chosen totally arbitrarily. We need that the data of the field at the $M$ points allow us to reconstruct the field at the $M'$ other points. Briefly, this fact can be seen with a 2-dimensional reduction of the manifold (for example, keeps only the time coordinate and one space coordinate). With a first-order finite difference approximation of the derivatives, we easily see that the field at a point in the slice of coordinate $t+\Delta t$ can be determined with the field configuration of two adjacent points in the slice of coordinate $t$. From the point of view of the continuum limit, this is similar to the fact that the field at a given space-time position is determined by a continuum of points at a previous time, located in the past light-cone. To simplify the equation, we introduce the shorthand notation $\Psi_{\partial V_i}$ to specify that $\Psi$ depends on points in $\partial V_i$, and we do not write the variables of the function. Then, the path-integral reads:

\begin{equation}
\hat W_{\partial V_1 \mapsto \partial V_2} \Psi_{\partial V_1} = \int_{\setR^{5M}} \prod_{n,\mu,\nu} d\psi_n d P_{\mu,n}  \braket{\zeta_{\partial V_2}}{W_{\partial V_1 \mapsto \partial V_2}(\{\psi,P_\mu\})} \Psi_{ \partial V_1}
\end{equation}
with $\zeta$ coming from the definition of the boundary cost function (see Eq.~\eqref{eq:bdr_scalar_field}). Exactly like the harmonic oscillator (see Sec.~\ref{sec:opt_path_integral_ho}), if $\Psi$ is a Dirac distribution in the space of field variable, it corresponds to a coherent state, and we recover the scalar product between coherent states.

\subsubsection{Optimal Control Path Integral for Lorentzian Space-time}

The construction of the path integral for Lorentzian space-time follows the same lines as the one of the scalar field, with the same subtleties concerning the boundary of the integration volume. Here, we provide a few comments on the differences. One of the main differences is that we can perform a 3+1D splitting, and we are reduced to a first-order differential equation on the time axis, but we still have a second-order differential equation (in the curvature tensor of a 3-surface). Hence, the discretization scheme must be, at least, given by a first-order spline on the time axis, and a second-order spline in space. As discussed in Sec.~\ref{sec:lorentzian_space_time}, the field variables that must be used for the calculus of variation are $f_A$ and $k^j_b$, and thus, we have to integrate over these fields at each position of the discretization grid (as well as their corresponding adjoint states).

Nevertheless, there is a subtle point concerning the choice of integration measure. Due to the nature of $f_A$ and $k^j_b$, we may choose a Lebesgue measure for all the variables of integration. However, we may also choose an integration measure that coincides with the Haar measure on $GL_3(\setR)_+$, like with the loop quantum gravity formalism which uses abundantly 
 integrals over $SU(2)$ or $SL(2,\setC)$ \cite{rovelli_quantum_2007,rovelli_covariant_2014}. The choice of integration measure has necessarily an impact on the integral of the function $\psi$ at the boundary, and thus, there is a potential influence on the boundary term used in the action, and as a consequence, an influence on the definition of the scalar product between coherent states. The clarification of this point is left for another study. Note that the problem has been tackled in~\cite{ansel2022loop}, where the gravitational system is written in terms of real Ashtekar variables, and the boundary cost function is defined using complexifier coherent states, which are standard coherent states used in loop quantum gravity and in spinfoam theory.

\section{Physical interpretation of the OC adjoint state}
\label{sec:physical interpretation}

The introduction of an optimal control adjoint state into the model of the physical system is interesting from a mathematical point of view, and it regularizes greatly the action variations at the boundary. Additionally, it generalizes the standard actions where these latter can be recovered with a specific value of the adjoint state. Despite these interesting mathematical properties, it must be clarified the role taken by adjoint state can play in a physical system, and how it can be interpreted.

From the definition of the optimal control Lagrangian \eqref{eq:action_oc}, it is clear that this is a quantity that allows us to measure how much $\dot q^\mu$ can deviate from $f^\mu(q,u,t)$. If the action is zero for an arbitrary value of $p^\mu$ (ignoring the contributions of $f_0$ and the boundary costs), the system follows a classical trajectory, and otherwise, it doesn't. The presence of the adjoint state is therefore intimately connected to the notion of an observer. It is a quantity that allows the observer to identify the trajectory of the system. With this formulation of the action principle, the system is not defined "by - itself", but makes sense with respect to another system, described by $p$. A similar interpretation can be made in the case of general relativity. In Eq.~\eqref{eq:SOC_gravity}, the adjoint states allow us to "measure" Einstein's field equations. From the standard point of view, EFE makes sense because we have an underlying structure of manifold. In order to define this equation, first, we must provide the tangent space of the manifold at a given position. In general, we describe the tangent space with the directional derivatives of the coordinates, but this is not strictly necessary. The gravitational adjoint state is a way to incorporate the notion of tangent space into the action. This is an interesting feature since we did not suppose any manifold structure for the gravitational field, we have just imposed that the triad at a given position must be in $GL_3(\setR)$. The structure of a manifold is rather a consequence of the least action principle.

When the path integral is taken, we integrate over all the possible "observer" state $p$, and the dynamics are constrained to follow classical trajectories. The quantum phenomena are therefore pushed at the boundary of the domain of definition. The full quantum state at the boundary is given in general by a function $\psi(q,p)$ which can describe how the quantum system and the observer have joined information. In a situation when they are not entangled, we can trace out the observer degrees of freedom and we are left with a quantum state $\psi(q)$ which is the quantum state of the system alone.

\section{Conclusion}
\label{sec:conclusion}

In this paper, we have introduced notions of generalized functions to the calculus of variation and path integrals. A specific analysis has been carried out for optimal control actions, whose Lagrangians are linear functionals of the equation of motion, instead of quadratic functions of the dynamical quantities.

In the first step, we introduced basic notions of generalized function, and then a proposal of generalized tensor fields has been given. This proposal differs from the one developed recently~\cite{nigsch2020nonlinear} because the Colombeau algebra remains coordinate dependent. Diffeomorphism invariance is recovered by the use of an operator that maps together Colombeau algebras defined with different coordinate systems. With this proposal, the application of the calculus of variation, usually defined with Gateaux derivatives, is straightforward. This construction is particularly useful since the calculus of variation is coordinate-dependent in the sense that it returns a tensor field written in a given coordinate system, but it is also diffeomorphic invariant in the sense that the procedure can be performed equivalently in any coordinate system.

Next, the calculus of variation has been revisited with the new formalism, and first- and second-order variations of a functional have been investigated. In particular, the analysis of the functional variation on the boundary of the integration volume has been performed with caution. This kind of calculation is ill-defined with distributions since it requires the computation of products of distributions, but all the calculation steps make sense with generalized functions. At the boundary, the variation of a functional is usually a divergent generalized function. As a side comment, the fact that the sign of the second-order variation depends on the mollifier was analyzed briefly. This feature leads us to additional assumptions on the mollifier to derive the Legendre-Clebsh condition for optimality.

Then, the paper focuses on the application of the formalism to three case studies: the harmonic oscillator, the scalar field, and the gravitational field (modeled by a Lorentzian manifold). For the first two cases, we compare three types of actions leading to the same equations of motion, and hence, the same classical physics. The first action is the standard action encountered in the literature, the second one is the optimal control action, and the third one is an action similar to the optimal control action, but with second-order derivatives instead of first-order ones. We prove that a specific choice of the boundary cost function in the optimal control action allows us to impose a condition on the adjoint state so that the adjoint state can be identified as the state of the system. For the harmonic oscillator, the boundary cost that verifies this property is given by the scalar product between coherent states. This gives us a natural connection between classical and quantum theory, where classical physics is observed inside the volume of integration, and quantum theory is observed at the boundary of the volume. For the scalar field, the case of the harmonic oscillator can be adapted, and the boundary cost function can be identified as the usual scalar product of coherent states when the boundary is the union of space-like hypersurfaces. It shall be stressed that the quantization of the scalar field is subject to many issues, specifically in curved space-time (such as the construction of Hadamard states). This kind of issue has been totally disregarded here, and additional work is required to fully justify that the proposal of boundary cost is always a well-defined scalar product of quantum states. In the case of a Lorentzian manifold, the calculus of variation requires more caution, to preserve the properties of the metric tensor. For that purpose, methods taken from the Lie-group theory have been used. The calculations of the first-order variations of the optimal control action for a Lorentzian manifold have led us to the conclusion that the boundary cost function cannot impose the adjoint state to take the value of the gravitational field variables. This is because the equations of motion of the adjoint state are a linearization of Einstein’s field equations. Finally, for the third type of action, we have underlined that a similar construction as the optimal control action can be performed, but the boundary cost function must be modified and it is not clear yet if it can be assimilated to a quantum scalar product.

The paper finishes with the construction of a path integral with the optimal control action. Interestingly, generalized functions allow us to handle simultaneously discretized and smooth functions. Then, the path integral is defined, as usual, with a discretization grid, but the field modeling of the physical system remains a continuous function. A direct calculation of the integrals at each point of the discretization allows us to show that the path integral gives us a propagator for coherent states, as already found in \cite{ansel2022loop}. The propagator is entirely determined by the flow of the classical equation of motion. In the case when the quantum state does not depend on the optimal control adjoint state, the path integral at the boundary of the integration volume is reduced to the scalar product of coherent states. If the state depends on the adjoint state, we have a non-trivial modification of the quantum state, which will be studied in detail elsewhere. The use of a generalized function in the calculation of the path integral allows us to analyze precisely the convergence of the integral in the continuous limit of the discretization grid. This latter must be chosen with some caution; otherwise, the path-integral is identically zero.

\section*{Acknowledgements}
The author acknowledges D. Viennot for useful feedback on this work


\begin{appendix}

\section{Example of second order variation: the Legendre-Clebsh condition revisited}
\label{sec:Legendre-Clebsh condition}

In this appendix, we revisit the Legendre-Clebsh condition~\cite{Courant_Hilber_vol_1,Courant_Hilber_vol_2}, as an example of a second-order variation of functional with non-linear generalized functions. A specific look is addressed on the additional assumptions that we have to make on the field variations.

\begin{theorem}
\label{thm:2nd_oder_variation_and_optimality_condition}
Let $I$ be a functional defined by $I= \int
 d^Nx F(\tilde T^\mu(x), \partial_\nu \tilde T^\mu(x))$. For simplicity, we assume that the functional is over an unbounded set.
Under the assumption that $d_\mu \eta (0) = 0$ and $d_\mu d_\nu \eta (0) >0$,  
the second-order variation is a (divergent) generalized number given by:
\begin{equation*}
\begin{split}
\frac{\delta^2 I}{\delta \tilde T^\mu\delta \tilde T^\nu}(y) = &\left[ \frac{\partial^2 F}{\partial \tilde T^\mu \partial \tilde T^\nu}(y) - \frac{d}{dy^\sigma}\left( \frac{\partial^2 F}{\partial \tilde T^\mu \partial(\partial_\sigma \tilde T^\nu)} (y)\right)\right] \frac{\eta (0)}{\epsilon}  \\ 
&- \frac{\partial^2 F}{\partial(\partial_\rho \tilde T^\mu)\partial(\partial_\sigma \tilde T^\nu)}(y) \frac{d_\sigma d_\rho \eta (0)}{\epsilon^2} + O(\epsilon^{q+1}),
\end{split}
\end{equation*}
and for $\epsilon$ sufficiently small, the sign of $\frac{\delta^2 I}{\delta \tilde T^\mu\delta \tilde T^\nu}(y)$ is given by the sign of
\[
\frac{\partial^2 F}{\partial(\partial_\rho \tilde T^\mu)\partial(\partial_\sigma \tilde T^\nu)}(y).
\]
\end{theorem}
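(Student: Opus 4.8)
The plan is to start directly from Definition~\ref{def:n-th_order_variation_functional}: I perturb the two components, $\tilde T^\mu\to\tilde T^\mu+\lambda_\mu\,\eta_\epsilon(x-y)$ and $\tilde T^\nu\to\tilde T^\nu+\lambda_\nu\,\eta_\epsilon(x-y)$, and extract the mixed second derivative $\partial_{\lambda_\mu}\partial_{\lambda_\nu}$ at $\lambda=0$, i.e. the polarization of the $n$-th variation. Because the functional is over an unbounded set, no regularized indicator and hence no boundary term appears, so I may differentiate under the integral sign. Expanding $F$ with the chain rule to second order sorts the integrand into three families according to how many of the two perturbed slots are differentiated: a pure term carrying $\eta_\epsilon(x-y)^2$ and the factor $\partial^2 F/\partial\tilde T^\mu\partial\tilde T^\nu$; two mixed terms carrying $\eta_\epsilon(x-y)\,\partial_\rho\eta_\epsilon(x-y)$ with the factors $\partial^2 F/\partial\tilde T^\mu\partial(\partial_\rho\tilde T^\nu)$ and $\partial^2 F/\partial\tilde T^\nu\partial(\partial_\rho\tilde T^\mu)$; and a pure-derivative term carrying $\partial_\rho\eta_\epsilon(x-y)\,\partial_\sigma\eta_\epsilon(x-y)$ with the factor $\partial^2 F/\partial(\partial_\rho\tilde T^\mu)\partial(\partial_\sigma\tilde T^\nu)$.

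Next I would evaluate each of the three integrals with the product-of-mollifiers rule established in Theorem~\ref{thm:scalar_prod_2_Dirac_distrib}, treating one mollifier factor as the regularizer that extracts the remaining smooth coefficient at $x=y$. The pure term gives $\tfrac{\partial^2 F}{\partial\tilde T^\mu\partial\tilde T^\nu}(y)\,\eta_\epsilon(0)=\tfrac{\partial^2 F}{\partial\tilde T^\mu\partial\tilde T^\nu}(y)\,\tfrac{\eta(0)}{\epsilon}$. For the mixed terms I would use $\eta_\epsilon\,\partial_\rho\eta_\epsilon=\tfrac12\partial_\rho(\eta_\epsilon^2)$, transfer the derivative onto the base point $y$ (equivalently integrate by parts), and only then extract; the two mixed contributions combine, via the symmetry of the mixed partials of $F$, into the single total-derivative term $-\tfrac{d}{dy^\sigma}\big(\tfrac{\partial^2 F}{\partial\tilde T^\mu\partial(\partial_\sigma\tilde T^\nu)}(y)\big)\tfrac{\eta(0)}{\epsilon}$. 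For the pure-derivative term I would integrate by parts once to move a derivative onto the adjacent mollifier, so that the surviving extraction produces $\partial_\rho\partial_\sigma\eta_\epsilon(0)\sim d_\rho d_\sigma\eta(0)/\epsilon^2$ multiplying $-\tfrac{\partial^2 F}{\partial(\partial_\rho\tilde T^\mu)\partial(\partial_\sigma\tilde T^\nu)}(y)$.

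At this stage the two hypotheses do their work. The assumption $d_\mu\eta(0)=0$ annihilates the odd (single-derivative) extractions that would otherwise also contribute at order $\epsilon^{-2}$ and spoil the consistency of the two admissible evaluation orders; it thereby guarantees that the $\epsilon^{-2}$ coefficient is exactly the symmetric tensor $d_\rho d_\sigma\eta(0)$. Collecting the three pieces reproduces the displayed formula, all equalities read modulo a negligible rest $O(\epsilon^{q+1})$. Finally, since the $\epsilon^{-2}$ contribution dominates the $\epsilon^{-1}$ ones as $\epsilon\to0$, the overall sign is fixed by that term; the positivity $d_\mu d_\nu\eta(0)>0$ renders the mollifier Hessian a definite form, so the sign of $\frac{\delta^2 I}{\delta\tilde T^\mu\delta\tilde T^\nu}(y)$ is controlled by the velocity Hessian $\frac{\partial^2 F}{\partial(\partial_\rho\tilde T^\mu)\partial(\partial_\sigma\tilde T^\nu)}(y)$, which is the Legendre--Clebsch statement.

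I anticipate the main obstacle to be the rigorous bookkeeping of products and derivatives of mollifiers, precisely the objects that are ill-defined in ordinary distribution theory. One must check that the formal identities used — reading $\int\eta_\epsilon(x-a)\eta_\epsilon(x-b)\,dx$ as $\eta_\epsilon(a-b)$, pulling smooth coefficients out at the coincidence point, and the integrations by parts — are mutually consistent inside the Colombeau equivalence classes, and in particular that evaluating the mixed and pure-derivative integrals in different orders yields the same generalized number. It is exactly this consistency requirement, together with tracking the $\epsilon$-orders to confirm the dominance of the $\epsilon^{-2}$ term, that forces the hypotheses $d_\mu\eta(0)=0$ and $d_\mu d_\nu\eta(0)>0$.
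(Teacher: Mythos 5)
Your proposal is correct and follows essentially the same route as the paper's own proof: the same chain-rule expansion into the three families $\eta_\epsilon^2$, $\eta_\epsilon\,d_\sigma\eta_\epsilon$, and $(d_\rho\eta_\epsilon)(d_\sigma\eta_\epsilon)$, the same integrations by parts to push derivatives off the mollifiers, the same coincidence-point extraction via the product-of-mollifiers rule of Theorem~\ref{thm:scalar_prod_2_Dirac_distrib}, the same use of $d_\mu\eta(0)=0$ to kill the single-derivative extractions, and the same dominance-of-$\epsilon^{-2}$ argument for the sign. The only differences are cosmetic refinements on your side --- polarizing with two independent $\lambda$ parameters instead of the paper's single-$\lambda$ definition, and explicitly tracking the two mixed terms and the consistency of the two evaluation orders, which the paper leaves implicit.
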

\begin{proof}
From the definition \ref{def:n-th_order_variation_functional}, we determine:
\begin{equation}
\begin{split}
\frac{\delta^2 I}{\delta \tilde T^\mu\delta \tilde T^\nu}(y) = \int d^Nx & \left[ \frac{\partial^2 F}{\partial \tilde T^\mu \partial \tilde T^\nu}(x) \eta_\epsilon (x-y)^2 + \frac{\partial^2 F}{\partial \tilde T^\mu \partial(\partial_\sigma \tilde T^\nu)} (x) \eta_\epsilon(x-y) d_\sigma \eta_\epsilon(x-y)\right. \\
&+ \left.  \frac{\partial^2 F}{\partial(\partial_\rho \tilde T^\mu)\partial(\partial_\sigma \tilde T^\nu)}(x) \left(d_\sigma \eta_\epsilon (x-y) \right)\left(d_\rho \eta_\epsilon (x-y) \right) \right]
\end{split}
\end{equation}
The next step of the proof is to use integration by parts to transform $d_\sigma \eta_\epsilon$ into $\eta_\epsilon$. This allows us to determine:
\begin{equation}
\begin{split}
\frac{\delta^2 I}{\delta \tilde T^\mu\delta \tilde T^\nu}(y) = \int d^Nx & \left[ \frac{\partial^2 F}{\partial \tilde T^\mu \partial \tilde T^\nu}(x) - \frac{d}{dx^\sigma}\left( \frac{\partial^2 F}{\partial \tilde T^\mu \partial(\partial_\sigma \tilde T^\nu)} (x)\right)\right] \eta_\epsilon (x-y)^2 \\
&-\eta_\epsilon (x-y)\frac{d}{dx^\sigma} \left[ \frac{\partial^2 F}{\partial(\partial_\rho \tilde T^\mu)\partial(\partial_\sigma \tilde T^\nu)}(x)d_\rho \eta_\epsilon (x-y) \right]
\end{split}
\end{equation}
The final result is obtained by evaluating the integral up to the order $\epsilon^q$ using the moment properties of the mollifiers, after that, we use the assumption $d_\sigma \eta_\epsilon (0) = 0$ to simplify the formula. The last part of the proof is to notice that for $\epsilon$ sufficiently small, this is the term in $1/\epsilon ^2$ which dominates, and thus this is the sign of this term which gives the sign of $\delta^2 I$.
\end{proof}

\section{Quadratic path-integral}
\label{sec:appendix_path_integral_ho}

In this section, we revisit briefly, with the formalism of generalized functions, several aspects of the quadratic action of the harmonic oscillator and we compute the path integral. We recall that the action is:
\begin{equation}
S = \frac{1}{2}\int_{t_i}^{t_f} dt ~\left( m \left( \frac{dq}{dt}\right)^2 - k q^2 \right)
\label{eq:appendix_quadratic_action}
\end{equation}
\subsubsection{First and second order variations of the action}

The first  order variation of Eq.~\eqref{eq:appendix_quadratic_action} is:
\begin{equation}
\frac{\delta S}{\delta q(t)}=  -\left( m \frac{d^2 q}{dt^2} + k q(t) \right) \tilde{\Id}_{[t_i,t_f]}(t) + \frac{m}{2} \frac{dq}{dt} \frac{d \tilde \Id_{[t_i,t_f]}(t)}{dt} + O(\epsilon^{q+1})
\end{equation}
while the second order is:
\begin{equation}
\begin{split}
\frac{\delta^2 S}{\delta q(t_1)\delta q(t_2)} = &- m \left(  \left.\frac{d^2 \eta_\epsilon}{dt^2} \right\vert_{t= t_1-t_2} \Id_{[t_i, t_f]}(t_1) + \left.\frac{d \eta_\epsilon}{dt} \right\vert_{t= t_1-t_2} \left.\frac{d \Id_{[t_i,t_f]}}{dt} \right\vert_{t= t_1-t_2} \right) \\
&- k \eta_\epsilon (t_1-t_2) + O(\epsilon^{q+1})
\end{split}
\end{equation}
In most of the usual mollifiers, we have $\left.\tfrac{d \eta_\epsilon}{dt} \right\vert_{t=0} = 0$ and $ \left.\tfrac{d^2 \eta_\epsilon}{dt^2} \right\vert_{t= t_1-t2} <0$, hence for a usual variation of the field, we have
\begin{equation}
\sign\left[\frac{\delta^2 S}{\delta q(t)^2} \right] = 1.
\end{equation}
\subsubsection{Path integral}

For this path integral, we proceed similarly as in Sec.~\ref{sec:path_integral_oc}. We discretize $q$ with a finite number of degrees of freedom. In the following, it is enough to use an approximation of the form of Thm.~\ref{thm:approx}:
\begin{equation}
q(t) = \sum_{n=1}^N q_n \Delta t ~\eta_\epsilon(t-t_n),
\end{equation}
with $t_{n+1} - t_n = \Delta t$. Inserting this approximation scheme into the action \ref{eq:appendix_quadratic_action}, we have:
\begin{equation}
\begin{split}
S = -\frac{1}{2} \sum_{ n l } & q_n q_l \Delta t^2 \left(  m\left.\frac{d^2 \eta_\epsilon}{dt^2} \right\vert_{t= t_n-t_l} \Id_{[t_i, t_f]}(t_n) + m \left.\frac{d \eta_\epsilon}{dt} \right\vert_{t= t_n-t_l} \left.\frac{d \Id_{[t_i,t_f]}}{dt} \right\vert_{t= t_n-t_l} \right. \\
& +  k ~\eta_\epsilon (t_n-t_l)\Big{ )} + O(\epsilon^{q+1})
\end{split}
\end{equation}

Then, the partition function is given by:
\begin{equation}
\begin{split}
\hat W(t_1,t_N) & = \int_{\setR^{N-1}} \prod_{n=1}^{N-1} dq_n e^{\ii S(\{q_n\})}  \\
& = \int_{\setR^{N-1}} \prod_{n=1}^{N-1} dq_n e^{- \ii \frac{\Delta t^2}{2} \vec \sum_{n,l} q_n q_l A(t_n,t_l) + o (\epsilon^{q+1})}  
\end{split}
\label{eq:partition_function_ho}
\end{equation}
with
\[
A(t_n,t_l) = m\left.\frac{d^2 \eta_\epsilon}{dt^2} \right\vert_{t= t_n-t_l} \Id_{[t_i, t_f]}(t_n) + m \left.\frac{d \eta_\epsilon}{dt} \right\vert_{t= t_n-t_l} \left.\frac{d \Id_{[t_i,t_f]}}{dt} \right\vert_{t= t_n-t_l}  +  k ~\eta_\epsilon (t_n-t_l).
\]
This can be considered as a matrix with elements $A_{n,l}$, and thus, Eq.~\eqref{eq:partition_function_ho} is a Gaussian integral. Using Wick's theorem, we recover the standard result~\cite{zinn2004path}:
\begin{equation}
\hat W =\left((-2 \ii \pi) ^{N/2}\Delta t^N \det(A)^{1/2}\right)^{-1}.
\end{equation}
\end{appendix}



\bibliographystyle{SciPost_bibstyle} 

\nolinenumbers

\end{document}